\newcommand{\LPk}{\mathrm{LP}_k}
\newcommand{\LP}{\mathrm{LP}}
\newcommand{\ot}{\leftarrow}
\newcommand{\argmax}{\mathop{\rm argmax}}
\newcommand{\argmin}{\mathop{\rm argmin}}
\theoremstyle{plain}
\newtheorem{theorem}     {Theorem}
\newtheorem{lemma}       {Lemma}
\newtheorem{corollary}   {Corollary}
\newtheorem{fact}        {Fact}
\newcommand{\subalign}[1]{%
  \vcenter{%
    \Let@ \restore@math@cr \default@tag
    \baselineskip\fontdimen10 \scriptfont\tw@
    \advance\baselineskip\fontdimen12 \scriptfont\tw@
    \lineskip\thr@@\fontdimen8 \scriptfont\thr@@
    \lineskiplimit\lineskip
    \ialign{\hfil$\m@th\scriptstyle##$&$\m@th\scriptstyle{}##$\crcr
      #1\crcr
    }%
  }
}
\title{{\Large The Densest Subgraph Problem with a Convex/Concave Size Function}\footnote{A preliminary version of this paper appeared in the \emph{Proceedings of the 27th International Symposium on Algorithms and Computation (ISAAC~2016)} \cite{Kawase_Miyauchi_16}.}}
\author{Yasushi Kawase}
\author{Atsushi Miyauchi}
\affil{Tokyo Institute of Technology, Tokyo, Japan\\
  \texttt{\{kawase.y.ab,~miyauchi.a.aa\}@m.titech.ac.jp}}
\date{\empty}
\begin{document}
\maketitle

\begin{abstract}
In the densest subgraph problem, given an edge-weighted undirected graph $G=(V,E,w)$, 
we are asked to find $S\subseteq V$ that maximizes the \emph{density}, 
i.e., $w(S)/|S|$, where $w(S)$ is the sum of weights of the edges in the subgraph induced by $S$. 
This problem has often been employed in a wide variety of graph mining applications. 
However, the problem has a drawback; 
it may happen that the obtained subset is too large or too small in comparison with the size desired in the application at hand. 
In this study, we address the size issue of the densest subgraph problem by generalizing the density of $S\subseteq V$.  
Specifically, we introduce the \emph{$f$-density} of $S\subseteq V$, which is defined as $w(S)/f(|S|)$, 
where $f:\mathbb{Z}_{\geq 0}\rightarrow \mathbb{R}_{\geq 0}$ is a monotonically non-decreasing function. 
In the \emph{$f$-densest subgraph problem} ($f$-DS), 
we aim to find $S\subseteq V$ that maximizes the $f$-density $w(S)/f(|S|)$. 
Although $f$-DS does not explicitly specify the size of the output subset of vertices, 
we can handle the above size issue 
using a \emph{convex}/\emph{concave} size function $f$ appropriately. 
For $f$-DS with convex function $f$, we propose a nearly-linear-time algorithm with a provable approximation guarantee.  
On the other hand, for $f$-DS with concave function $f$, 
we propose an LP-based exact algorithm, a flow-based $O(|V|^3)$-time exact algorithm for unweighted graphs, 
and a nearly-linear-time approximation algorithm. 
\end{abstract}

\section{Introduction}
Finding dense components in a graph is an active research topic in graph mining. 
Techniques for identifying dense subgraphs have been used in various applications. 
For example, in Web graph analysis, they are used for detecting communities 
(i.e., sets of web pages dealing with the same or similar topics)~\cite{Dourisboure+_07} 
and spam link farms~\cite{Gibson+_05}. 
As another example, in bioinformatics, 
they are used for finding molecular complexes in protein--protein interaction networks~\cite{Bader_Hogue_03} 
and identifying regulatory motifs in DNA~\cite{Fratkin+_06}.  
Furthermore, they are also used for expert team formation~\cite{Bonchi+_14,Tsourakakis+_13} 
and real-time story identification in micro-blogging streams~\cite{Angel+_12}. 

To date, various optimization problems have been considered to find dense components in a graph. 
The densest subgraph problem is one of the most well-studied optimization problems.  
Let $G=(V,E,w)$ be an edge-weighted undirected graph consisting of $n=|V|$ vertices, $m=|E|$ edges, 
and a weight function $w:E\rightarrow \mathbb{Q}_{>0}$, where $\mathbb{Q}_{>0}$ is the set of positive rational numbers. 
For a subset of vertices $S\subseteq V$, let $G[S]$ be the subgraph induced by $S$, 
i.e., $G[S]=(S,E(S))$, where $E(S)=\{\{i,j\}\in E\mid i,j\in S\}$. 
The \emph{density} of $S\subseteq V$ is defined as $w(S)/|S|$, where $w(S)=\sum_{e\in E(S)}w(e)$. 
In the (weighted) densest subgraph problem, given an (edge-weighted) undirected graph $G=(V,E,w)$, 
we are asked to find $S\subseteq V$ that maximizes the density $w(S)/|S|$. 

The densest subgraph problem has received significant attention recently 
because it can be solved exactly in polynomial time and approximately in nearly linear time.  
In fact, there exist a flow-based exact algorithm~\cite{Goldberg_84} 
and a linear-programming-based (LP-based) exact algorithm~\cite{Charikar_00}. 
Charikar~\cite{Charikar_00} demonstrated that 
the greedy algorithm designed by Asahiro et al.~\cite{Asahiro+_00}, which is called the \emph{greedy peeling}, 
obtains a $2$-approximate solution\footnote{
A feasible solution is said to be \emph{$\gamma$-approximate} 
if its objective value times $\gamma$ is greater than or equal to the optimal value. 
An algorithm is called a \emph{$\gamma$-approximation algorithm} 
if it runs in polynomial time and returns a $\gamma$-approximate solution for any instance. 
For a $\gamma$-approximation algorithm, $\gamma$ is referred to as an \emph{approximation ratio} of the algorithm. 
} for any instance. 
This algorithm runs in $O(m+n\log n)$ time for weighted graphs and $O(m+n)$ time for unweighted graphs. 

However, the densest subgraph problem has a drawback;
it may happen that the obtained subset is too large or too small in comparison with the size desired in the application at hand. 
To overcome this issue, some variants of the problem have often been employed. 
The densest $k$-subgraph problem (D$k$S) is a straightforward size-restricted variant of the densest subgraph problem~\cite{FeKoPe01}.  
In this problem, given an additional input $k$ being a positive integer, 
we are asked to find $S\subseteq V$ of size $k$ that maximizes the density $w(S)/|S|$. 
Note that in this problem, the objective function can be replaced by $w(S)$ since $|S|$ is fixed to $k$. 
Unfortunately, it is known that this size restriction makes the problem much harder to solve. 
In fact, Khot~\cite{Khot_06} proved that D$k$S has no PTAS under some reasonable computational complexity assumption. 
The current best approximation algorithm has an approximation ratio of $O(n^{1/4+\epsilon})$ 
for any $\epsilon >0$~\cite{Bhaskara+_10}. 

Furthermore, Andersen and Chellapilla~\cite{Andersen_Chellapilla_09} introduced two relaxed versions of D$k$S. 
The first problem, the densest at-least-$k$-subgraph problem (Dal$k$S), 
asks for $S\subseteq V$ that maximizes the density $w(S)/|S|$ under the size constraint $|S|\geq k$. 
For this problem, Andersen and Chellapilla~\cite{Andersen_Chellapilla_09} adopted the greedy peeling, 
and demonstrated that the algorithm yields a $3$-approximate solution for any instance. 
Later, Khuller and Saha~\cite{Khuller_Saha_09} investigated the problem more deeply. 
They proved that Dal$k$S is NP-hard, and designed a flow-based algorithm and an LP-based algorithm. 
These algorithms have an approximation ratio of $2$, which improves the above approximation ratio of $3$. 
The second problem is called the densest at-most-$k$-subgraph problem (Dam$k$S), 
which asks for $S\subseteq V$ that maximizes the density $w(S)/|S|$ under the size constraint $|S|\leq k$. 
The NP-hardness is immediate since finding a maximum clique can be reduced to it. 
Khuller and Saha~\cite{Khuller_Saha_09} proved that 
approximating Dam$k$S is as hard as approximating D$k$S, within a constant factor.

\subsection{Our Contribution}
In this study, we address the size issue of the densest subgraph problem by generalizing the density of $S\subseteq V$. 
Specifically, we introduce the \emph{$f$-density} of $S\subseteq V$, which is defined as $w(S)/f(|S|)$, 
where $f: \mathbb{Z}_{\geq 0}\rightarrow \mathbb{R}_{\geq 0}$ 
is a monotonically non-decreasing function with $f(0)=0$.\footnote{
To handle various types of functions (e.g., $f(x)=x^\alpha$ for $\alpha >0$), 
we set the codomain of the function $f$ to be the set of nonnegative \emph{real} numbers. 
We assume that we can compare $p\cdot f(i)$ and $q\cdot f(j)$ in constant time 
for any $p,q\in\mathbb{Q}$ and $i,j\in\mathbb{Z}_{\geq 0}$.
} 
Note that $\mathbb{Z}_{\geq 0}$ and $\mathbb{R}_{\geq 0}$ are 
the sets of nonnegative integers and nonnegative real numbers, respectively. 
In the \emph{$f$-densest subgraph problem} ($f$-DS), 
we aim to find $S\subseteq V$ that maximizes the $f$-density $w(S)/f(|S|)$.
For simplicity, we assume that $E\neq \emptyset$. Hence, any optimal solution $S^*\subseteq V$ satisfies $|S^*|\geq 2$. 
Although $f$-DS does not explicitly specify the size of the output subset of vertices, 
we can handle the above size issue 
using a \emph{convex} size function $f$ or a \emph{concave} size function $f$ appropriately. 
In fact, we can show that any optimal solution to $f$-DS with convex (resp. concave) function $f$ has a size 
smaller (resp. larger) than or equal to that of any densest subgraph (i.e., any optimal solution to the densest subgraph problem). 
For details, see Sections~\ref{sec:convex} and \ref{sec:concave}. 

Here we mention the relationship between our problem and D$k$S. 
Any optimal solution $S^*\subseteq V$ to $f$-DS is a maximum weight subset of size $|S^*|$, 
i.e., $S^*\in \argmax\{w(S)\mid S\subseteq V,~|S|=|S^*|\}$, 
which implies that $S^*$ is also optimal to D$k$S with $k=|S^*|$. 
Furthermore, the iterative use of a $\gamma$-approximation algorithm for D$k$S leads to a $\gamma$-approximation algorithm for $f$-DS. 
Using the above $O(n^{1/4+\epsilon})$-approximation algorithm for D$k$S~\cite{Bhaskara+_10}, 
we can obtain an $O(n^{1/4+\epsilon})$-approximation algorithm for $f$-DS. 

In what follows, we summarize our results for both the cases where $f$ is convex and where $f$ is concave.

\paragraph*{The case where $f$ is convex.} 
We first describe our results for the case where $f$ is convex. 
A function $f:\mathbb{Z}_{\geq 0}\rightarrow \mathbb{R}_{\geq 0}$ is said to be \emph{convex} 
if $f(x)-2f(x+1)+f(x+2)\ge 0$ holds for any $x\in\mathbb{Z}_{\ge 0}$.
We first prove the NP-hardness of $f$-DS with a certain convex function $f$ 
by constructing a reduction from Dam$k$S.
Thus, for $f$-DS with convex function $f$, 
one of the best possible ways is to design an algorithm with a provable approximation guarantee. 

To this end, we propose a 
$\min\left\{\frac{f(2)/2}{f(|S^*|)/|S^*|^2},~\frac{2f(n)/n}{f(|S^*|)-f(|S^*|-1)}\right\}$-approximation algorithm, 
where $S^*\subseteq V$ is an optimal solution to $f$-DS with convex function $f$. 
Our algorithm consists of the following two procedures, and outputs the better solution found by them. 
The first one is based on the brute-force search, 
which obtains an $\frac{f(2)/2}{f(|S^*|)/|S^*|^2}$-approximate solution in $O(m+n)$ time. 
The second one adopts the greedy peeling, 
which obtains a $\frac{2f(n)/n}{f(|S^*|)-f(|S^*|-1)}$-approximate solution in $O(m+n\log n)$ time. 
Thus, the total running time of our algorithm is $O(m+n\log n)$. 
Our analysis on the approximation ratio of the second procedure extends 
the analysis by Charikar~\cite{Charikar_00} for the densest subgraph problem. 

At the end of our analysis, 
we observe the behavior of the approximation ratio of our algorithm for three concrete size functions. 
We consider size functions \emph{between} linear and quadratic
because, as we will see later, 
$f$-DS with any super-quadratic size function is a trivial problem; in fact, it only produces constant-size optimal solutions.
The first example is $f(x)=x^\alpha~(\alpha \in [1, 2])$. 
We show that the approximation ratio of our algorithm is $2\cdot n^{(\alpha-1)(2-\alpha)}$, 
where the worst-case performance of $2\cdot n^{1/4}$ is attained at $\alpha=1.5$. 
The second example is $f(x)=\lambda x+(1-\lambda)x^2~(\lambda \in [0,1))$. 
For this case, the approximation ratio of our algorithm is $(2-\lambda)/(1-\lambda)$, 
which is a constant for a fixed $\lambda$.
The third example is $f(x)=x^2/(\lambda x +(1-\lambda))~(\lambda \in [0,1])$. 
Note that this size function is derived by density function $\lambda \frac{w(S)}{|S|}+(1-\lambda)\frac{w(S)}{|S|^2}$. 
The approximation ratio of our algorithm is $4/(1+\lambda)$, which is at most $4$.

\paragraph*{The case where $f$ is concave.} 
We next describe our results for the case where $f$ is concave. 
A function $f:\mathbb{Z}_{\geq 0}\rightarrow \mathbb{R}_{\geq 0}$ is said to be \emph{concave} 
if $f(x)-2f(x+1)+f(x+2)\le 0$ holds for any $x\in\mathbb{Z}_{\ge 0}$.
Unlike the above convex case, $f$-DS in this case can be solved exactly in polynomial time.

In fact, we present an LP-based exact algorithm, 
which extends Charikar's exact algorithm for the densest subgraph problem~\cite{Charikar_00} 
and Khuller and Saha's $2$-approximation algorithm for Dal$k$S~\cite{Khuller_Saha_09}. 
It should be emphasized that our LP-based algorithm obtains 
not only an optimal solution to $f$-DS but also some attractive subsets of vertices. 
Let us see an example in Figure~\ref{fig:dense_frontier}. 
The graph consists of 8 vertices and 11 unweighted edges (i.e., $w(e)=1$ for every $e\in E$). 
For this graph, we plotted all the points contained in $\mathcal{P}=\{(|S|,w(S))\mid S\subseteq V\}$. 
We refer to the extreme points of the upper convex hull of $\mathcal{P}$ as the \emph{dense frontier points}. 
The (smallest) densest subgraph is a typical subset of vertices corresponding to a dense frontier point.
Our LP-based algorithm obtains a corresponding subset of vertices for every dense frontier point. 
It should be noted that the algorithm \textsf{SSM} designed by Nagano, Kawahara, and Aihara~\cite{NaKaAi11} can also be used to 
obtain a corresponding subset of vertices for every dense frontier point. 
The difference between their algorithm and ours is that 
their algorithm is based on the computation of a minimum norm base, 
whereas ours solves linear programming problems. 
\begin{figure}[t]
\centering
\includegraphics[scale=1.12]{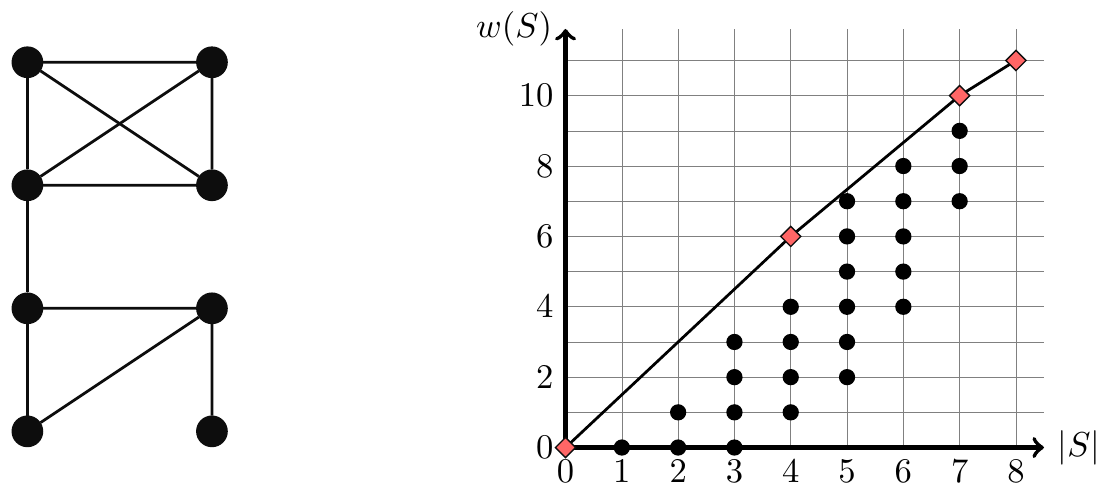}
\caption{An example graph and the corresponding points in $\mathcal{P}=\{(|S|,w(S))\mid S\subseteq V\}$. 
The diamond-shaped points, 
i.e., $(0,0),(4,6),(7,10)$, and $(8,11)$,
are dense frontier points.}\label{fig:dense_frontier}
\end{figure}

Moreover, in this concave case, we design a combinatorial exact algorithm for unweighted graphs.
Our algorithm is based on the standard technique for fractional programming.
By using the technique, we can reduce $f$-DS to a sequence of submodular function minimizations. 
However, the direct application of a submodular function minimization algorithm leads to a computationally expensive algorithm 
that runs in $O(n^5(m+n)\cdot \log n)$ time.
To reduce the computation time, 
we replace a submodular function minimization algorithm with a much faster flow-based algorithm  
that substantially extends a technique of Goldberg's flow-based algorithm 
for the densest subgraph problem~\cite{Goldberg_84}. 
The total running time of our algorithm is $O(n^3)$.
Modifying this algorithm, we also present an $O\left(\frac{n^3}{\log n}\cdot \log\left(\frac{\log n}{\epsilon}\right)\right)$-time 
$(1+\epsilon)$-approximation algorithm for weighted graphs. 

Although our flow-based algorithm is much faster than the reduction-based algorithm, 
the running time is still long for large-sized graphs. 
To design an algorithm with much higher scalability, 
we adopt the greedy peeling.  
As mentioned above, this algorithm runs in 
$O(m+n\log n)$ time for weighted graphs and $O(m+n)$ time for unweighted graphs.
We prove that the algorithm yields a $3$-approximate solution for any instance.

\subsection{Related Work}
Tsourakakis et al.~\cite{Tsourakakis+_13} introduced a general optimization problem to find dense subgraphs, 
which is referred to as the optimal $(g,h,\alpha)$-edge-surplus problem. 
In this problem, given an unweighted undirected graph $G=(V,E)$, we are asked to find $S\subseteq V$ that maximizes 
\(\textsf{edge-surplus}_\alpha(S)=g(|E(S)|) -\alpha h(|S|)\), 
where $g$ and $h$ are strictly monotonically increasing functions, and $\alpha >0$ is a constant. 
The intuition behind this optimization problem is the same as that of $f$-DS. 
In fact, the first term $g(|E(S)|)$ prefers $S\subseteq V$ that has a large number of edges, 
whereas the second term $-\alpha h(|S|)$ penalizes $S\subseteq V$ with a large size.  
Tsourakakis et al.~\cite{Tsourakakis+_13} were motivated by finding near-cliques 
(i.e., relatively small dense subgraphs), 
and they derived the function $\textsf{OQC}_\alpha(S) =|E(S)|-\alpha \binom{|S|}{2}$, 
which is called the OQC function, by setting $g(x)=x$ and $h(x)=x(x-1)/2$. 
For OQC function maximization, they adopted the greedy peeling and a simple local search heuristic. 

Recently, Yanagisawa and Hara~\cite{Yanagisawa_Hara_16} introduced 
density function $|E(S)|/|S|^\alpha$ for $\alpha \in (1,2]$, 
which they called the discounted average degree. 
For discounted average degree maximization, 
they designed an integer-programming-based exact algorithm, 
which is applicable only to graphs with a maximum of a few thousand edges. 
They also designed a local search heuristic, 
which is applicable to web-scale graphs but has no provable approximation guarantee. 
As mentioned above, our algorithm for $f$-DS with convex function $f$ runs in $O(m+n\log n)$ time, 
and has an approximation ratio of $2\cdot n^{(\alpha-1)(2-\alpha)}$ for $f(x)=x^\alpha$ ($\alpha \in [1,2]$).

\section{Convex Case}\label{sec:convex}
In this section, we investigate $f$-DS with convex function $f$. 
A function $f:\mathbb{Z}_{\geq 0}\rightarrow \mathbb{R}_{\geq 0}$ is said to be \emph{convex} 
if $f(x)-2f(x+1)+f(x+2)\ge 0$ holds for any $x\in\mathbb{Z}_{\ge 0}$. 
We remark that $f(x)/x$ is monotonically non-decreasing for $x$ since we assume that $f(0)=0$. 
It should be emphasized that any optimal solution to $f$-DS with convex function $f$ 
has a size smaller than or equal to that of any densest subgraph. 
To see this, let $S^*\subseteq V$ be any optimal solution to $f$-DS 
and $S^*_\text{DS}\subseteq V$ be any densest subgraph.
Then we have
\begin{align}
  \frac{f(|S^*|)}{|S^*|}
  =\frac{w(S^*)/|S^*|}{w(S^*)/f(|S^*|)}
  \le \frac{w(S^*_\text{DS})/|S^*_\text{DS}|}{w(S^*_\text{DS})/f(|S^*_\text{DS}|)}
  =\frac{f(|S^*_\text{DS}|)}{|S^*_\text{DS}|}.\label{eq:size}
\end{align}
This implies that $|S^*|\leq |S^*_\text{DS}|$ holds because $f(x)/x$ is monotonically non-decreasing. 

\subsection{Hardness}
We first prove that $f$-DS with convex function $f$ contains Dam$k$S as a special case.
\begin{theorem}\label{thm:hardness}
For any integer $k\in [2,n]$,
$S\subseteq V$ is optimal to Dam$k$S if and only if
$S$ is optimal to $f$-DS with (convex) function 
$f(x)=\max\left\{x,~\frac{w(V)}{w(e)/2}(x-k)+k\right\}$, where $e$ is an arbitrary edge. 
\end{theorem}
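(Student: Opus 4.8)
The plan is to read off from the definition of $f$ that it is the identity on inputs at most $k$ and rises very steeply afterwards, and to use this to establish two facts: (i) for every $S\subseteq V$ with $|S|\le k$ the $f$-density $w(S)/f(|S|)$ coincides with the ordinary density $w(S)/|S|$, and (ii) every $S$ with $|S|>k$ has $f$-density strictly below the optimal value of Dam$k$S. Together these force the optimal solution sets of the two problems to coincide. Concretely, writing $c=2w(V)/w(e)$, note $c\ge 2$ because $w(V)\ge w(e)>0$. A one-line computation — comparing $c(x-k)+k$ with $x$, whose difference has the sign of $(c-1)(x-k)$ — shows $f(x)=x$ for every integer $x$ with $0\le x\le k$ and $f(x)=c(x-k)+k$ for every integer $x\ge k$. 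In particular $f(0)=0$, $f$ is monotonically non-decreasing, and $f$ is convex since its forward differences equal $1$ up to $k$ and $c\ge 2$ beyond $k$. Consequently $w(S)/f(|S|)=w(S)/|S|$ whenever $|S|\le k$, so the maximum of $w(S)/f(|S|)$ over all $S$ with $|S|\le k$ equals the optimal value of Dam$k$S, which I denote $\mathrm{opt}_k$.

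Next I would bound both quantities. Taking $S$ to be the two endpoints of the fixed edge $e$ — a feasible Dam$k$S solution since $k\ge 2$ — gives $\mathrm{opt}_k\ge w(e)/2$. On the other hand, for any $T\subseteq V$ with $|T|>k$ we have $|T|-k\ge 1$, hence $f(|T|)=c(|T|-k)+k\ge c+k>2w(V)/w(e)$, and since $w(T)\le w(V)$ this yields $w(T)/f(|T|)<w(e)/2\le \mathrm{opt}_k$. Therefore no set of size exceeding $k$ can be optimal for $f$-DS, so the optimal value of $f$-DS equals $\mathrm{opt}_k$ and every optimal solution of $f$-DS has size at most $k$.

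Finally I would assemble the equivalence. If $S$ is optimal for Dam$k$S, then $|S|\le k$ and $w(S)/f(|S|)=w(S)/|S|=\mathrm{opt}_k$, which is the optimal value of $f$-DS, so $S$ is optimal for $f$-DS. Conversely, if $S$ is optimal for $f$-DS, then the bound above forces $|S|\le k$, so $S$ is feasible for Dam$k$S and $w(S)/|S|=w(S)/f(|S|)=\mathrm{opt}_k$, i.e.\ $S$ is optimal for Dam$k$S. The only step needing genuine care is the \emph{strict} inequality in (ii): the trivial two-vertex solution on $e$ only guarantees density $w(e)/2$, which is precisely why the definition of $f$ divides $w(e)$ by $2$ — this makes the slope of $f$ past $k$ at least $2w(V)/w(e)$, steep enough that even $T=V$ (which may contain every edge, so $w(T)$ as large as $w(V)$) is beaten, with the additive $+k$ in the denominator providing the strictness in this extreme case. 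Everything else is routine.
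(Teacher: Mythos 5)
Your proposal is correct and follows essentially the same route as the paper: observe that $f(x)=x$ for $x\le k$ so the $f$-density agrees with the ordinary density on feasible Dam$k$S sets, and that any $S$ with $|S|>k$ has $f$-density strictly below $w(e)/2$ (hence below the value achieved by the two endpoints of $e$), so it cannot be optimal. Your writeup merely makes explicit the biconditional bookkeeping and the verification that $f$ is the claimed piecewise-linear function, both of which the paper leaves implicit.
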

\begin{proof}
Since the maximum of linear functions is convex, the function $f$ is convex. 
We remark that 
\begin{align*}
f(x)=
\begin{cases}
	x &\text{if } x\leq k,\\
	\frac{w(V)}{w(e)/2}(x-k)+k & \text{otherwise}. 
\end{cases}
\end{align*}
For any $S\subseteq V$ with $|S|\le k$, we have $w(S)/f(|S|)=w(S)/|S|$. 
On the other hand, for any $S\subseteq V$ with $|S|>k$, we have
\begin{align*}
	\frac{w(S)}{f(|S|)}
	= \frac{w(S)}{\frac{w(V)}{w(e)/2}(|S|-k)+k}
	<\frac{w(S)}{\frac{w(V)}{w(e)/2}}
	\leq \frac{w(e)}{2}, 
\end{align*}
which implies that $S$ is not optimal to $f$-DS. 
Thus, we have the theorem.
\end{proof}

\subsection{Our Algorithm}
In this subsection, we provide an algorithm for $f$-DS with convex function $f$. 
Our algorithm consists of the following two procedures, and outputs the better solution found by them. 
Let $S^*\subseteq V$ be an optimal solution to the problem. 
The first one is based on the brute-force search, 
which obtains an $\frac{f(2)/2}{f(|S^*|)/|S^*|^2}$-approximate solution in $O(m+n)$ time. 
The second one adopts the greedy peeling~\cite{Asahiro+_00}, 
which obtains a $\frac{2f(n)/n}{f(|S^*|)-f(|S^*|-1)}$-approximate solution in $O(m+n\log n)$ time. 
Combining these results, both of which will be proved later, we have the following theorem. 
\begin{theorem}\label{thm:convex}
Let $S^*\subseteq V$ be an optimal solution to $f$-DS with convex function $f$. 
For the problem, our algorithm runs in $O(m+n\log n)$ time, and has an approximation ratio of 
\begin{align*}
\min\left\{\frac{f(2)/2}{f(|S^*|)/|S^*|^2},~\frac{2f(n)/n}{f(|S^*|)-f(|S^*|-1)}\right\}.
\end{align*}
\end{theorem}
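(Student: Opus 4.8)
The plan is to analyze the two procedures separately and then combine them. Since the algorithm outputs whichever of the two produced subsets has the larger $f$-density, its approximation ratio is the minimum of the two individual ratios, and its running time is the sum $O(m+n)+O(m+n\log n)=O(m+n\log n)$. So I would reduce the theorem to two lemmas: that the brute-force procedure returns, in $O(m+n)$ time, a subset whose $f$-density is at least $\mathrm{OPT}\cdot\frac{f(|S^*|)/|S^*|^2}{f(2)/2}$ (where $\mathrm{OPT}=w(S^*)/f(|S^*|)$), and that the greedy peeling returns, in $O(m+n\log n)$ time, a subset whose $f$-density is at least $\mathrm{OPT}\cdot\frac{f(|S^*|)-f(|S^*|-1)}{2f(n)/n}$.

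For the brute-force procedure I would have it return the two endpoints of a maximum-weight edge $e^*$ (equivalently, a best size-$2$ subset), which is found by a single scan of the edge set in $O(m+n)$ time and has $f$-density $w(e^*)/f(2)$. The analysis is a crude edge count: $w(S^*)=\sum_{e\in E(S^*)}w(e)\le\binom{|S^*|}{2}w(e^*)\le\frac{|S^*|^2}{2}\,w(e^*)$, so $\mathrm{OPT}=\frac{w(S^*)}{f(|S^*|)}\le\frac{|S^*|^2/2}{f(|S^*|)}\,w(e^*)=\frac{f(2)/2}{f(|S^*|)/|S^*|^2}\cdot\frac{w(e^*)}{f(2)}$, which is exactly the claimed guarantee.

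For the greedy peeling I would run the Asahiro et al.\ procedure: repeatedly delete a vertex of minimum weighted degree in the current induced subgraph, producing a chain $V=S_n\supsetneq S_{n-1}\supsetneq\dots\supsetneq S_1$, and output the $S_i$ of maximum $f$-density; with a Fibonacci heap keyed on weighted degree this runs in $O(m+n\log n)$ time (bucketing gives $O(m+n)$ for unweighted graphs), and maintaining $w(S_i)$ incrementally while comparing $f$-densities in $O(1)$ adds only $O(n)$. The approximation bound extends Charikar's argument. Set $\Delta=f(|S^*|)-f(|S^*|-1)$. First, optimality of $S^*$ gives a per-vertex degree bound inside $G[S^*]$: for each $v\in S^*$, deleting $v$ cannot increase the $f$-density, i.e.\ $\frac{w(S^*)-\deg_{S^*}(v)}{f(|S^*|-1)}\le\frac{w(S^*)}{f(|S^*|)}$, which rearranges to $\deg_{S^*}(v)\ge\mathrm{OPT}\cdot\Delta$. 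Second, consider the first iteration at which the peeling deletes a vertex $v\in S^*$, and let $T$ be the current subset at that moment; then $S^*\subseteq T$, and $v$ having minimum weighted degree in $G[T]$ yields $\deg_T(u)\ge\deg_T(v)\ge\deg_{S^*}(v)\ge\mathrm{OPT}\cdot\Delta$ for every $u\in T$. Summing, $w(T)=\frac12\sum_{u\in T}\deg_T(u)\ge\frac{|T|}{2}\,\mathrm{OPT}\cdot\Delta$, hence $\frac{w(T)}{f(|T|)}\ge\frac{\mathrm{OPT}\cdot\Delta}{2}\cdot\frac{|T|}{f(|T|)}\ge\frac{\mathrm{OPT}\cdot\Delta}{2}\cdot\frac{n}{f(n)}$, where the last step uses $|T|\le n$ and the fact (noted in the text, from $f(0)=0$ and convexity) that $f(x)/x$ is non-decreasing. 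Since $T=S_{|T|}$ is among the candidates, the output has $f$-density at least $\mathrm{OPT}\cdot\frac{\Delta}{2f(n)/n}$, as desired; in the degenerate case $\Delta=0$ the bound is vacuous but then the first procedure controls the minimum, and when $|S^*|=n$ one simply has $T=V=S^*$.

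I expect the main obstacle to be the greedy-peeling lemma, and inside it the two structural facts: (a) recognizing $\Delta=f(|S^*|)-f(|S^*|-1)$ as the quantity that plays, for a general convex $f$, the role Charikar's optimal density $\lambda^*$ plays for the plain densest subgraph, and proving it via the single-vertex-deletion optimality inequality; and (b) choosing $T$ to be the current subset \emph{just before} the peeling first removes a vertex of $S^*$, which is precisely what transfers the degree bound from $S^*$ to all of $T$. The remaining pieces — the crude edge count, the monotonicity of $f(x)/x$ used to bound $f(|T|)$ by $\frac{|T|}{n}f(n)$, and the Fibonacci-heap implementation of the peeling — are routine.
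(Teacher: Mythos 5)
Your proposal is correct and follows essentially the same route as the paper: the greedy-peeling analysis (the single-vertex-deletion degree bound $d_{S^*}(v)\ge \mathrm{OPT}\cdot(f(|S^*|)-f(|S^*|-1))$, the choice of the last superset $S_l\supseteq S^*$ in the peeling order, and the use of the monotonicity of $f(x)/x$) matches Lemma~\ref{lem:convex_peeling} step for step. The only difference is cosmetic: for the brute-force part you bound $w(S^*)\le\binom{|S^*|}{2}w(e^*)$ directly from the maximum edge weight, whereas the paper derives the same inequality as the $k=2$ case of a general edge-density monotonicity lemma (Lemma~\ref{lem:edgedensity}) that it also reuses for larger $k$.
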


\subsubsection{Brute-Force Search}\label{subsec:small}
As will be shown below, to obtain an $\frac{f(2)/2}{f(|S^*|)/|S^*|^2}$-approximate solution, 
it suffices to find the heaviest edge (i.e., $\argmax\{w(e)\mid e\in E\}$), 
which can be done in $O(m+n)$ time. 
Here we present a more general algorithm, which is useful for some case. 
Our algorithm examines all the subsets of vertices of size at most $k$, 
and then returns an optimal subset among them, where $k$ is a constant that satisfies $k\ge 2$. 
For reference, we describe the procedure in Algorithm~\ref{alg:brute}. 
This algorithm can be implemented to run in $O((m+n)n^k)$ time
because the number of subsets with at most $k$ vertices is $\sum_{i=0}^k\binom{n}{i}=O(n^k)$ and
the value of $w(S)/f(|S|)$ for each $S\subseteq V$ can be computed in $O(m+n)$ time. 
\begin{algorithm}[t] 
\caption{Brute-force search}\label{alg:brute}
\For{$i\ot 2,\dots, k$}{
  Find $S^*_i\in \argmax\{w(S)\mid S\subseteq V,~|S|=i\}$ by examining all the candidate subsets\;
}
\Return $S\in \{S^*_2,\dots,S^*_k\}$ that maximizes $w(S)/f(|S|)$\;
\end{algorithm}

We analyze the approximation ratio of the algorithm. 
Let $S_i^*\subseteq V$ denote a maximum weight subset of size $i\geq 2$, i.e., $S_i^*\in \argmax\{w(S)\mid S\subseteq V,~|S|=i\}$.
We refer to $w(S_i^*)/\binom{i}{2}$ as the \emph{edge density} of $i$ vertices. 
The following lemma gives a fundamental property of the edge density. 
\begin{lemma}\label{lem:edgedensity}
The edge density is monotonically non-increasing for the number of vertices, 
i.e., \(w(S_i^*)/\binom{i}{2}\ge w(S_j^*)/\binom{j}{2}\) holds for any $2\le i\le j\le n$.
\end{lemma}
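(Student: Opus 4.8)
The plan is to prove the inequality directly by a double-counting (averaging) argument, comparing a maximum-weight set of size $j$ against maximum-weight sets of size $i$. Fix $2\le i\le j\le n$ and let $S_j^*\subseteq V$ be a maximum-weight subset of size $j$. First I would consider the family of all size-$i$ subsets $T\subseteq S_j^*$: there are $\binom{j}{i}$ of them, and by the maximality defining $S_i^*$ we have $w(T)\le w(S_i^*)$ for every such $T$.

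Next I would evaluate $\sum_{T}w(T)$ over this family in two ways. On the one hand, the bound above gives $\sum_T w(T)\le \binom{j}{i}\,w(S_i^*)$. On the other hand, exchanging the order of summation, each edge $e\in E(S_j^*)$ contributes $w(e)$ exactly once for every size-$i$ subset $T$ containing both endpoints of $e$; the number of such $T$ is $\binom{j-2}{i-2}$, since the two endpoints are forced and the remaining $i-2$ vertices are chosen freely among the other $j-2$ vertices of $S_j^*$. Hence
\begin{align*}
\binom{j-2}{i-2}\,w(S_j^*)=\sum_{\substack{T\subseteq S_j^*\\ |T|=i}} w(T)\le \binom{j}{i}\,w(S_i^*).
\end{align*}

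The proof then concludes via the elementary identity $\binom{j}{i}\big/\binom{j-2}{i-2}=\frac{j(j-1)}{i(i-1)}=\binom{j}{2}\big/\binom{i}{2}$, which rearranges the displayed inequality into the claimed $w(S_i^*)/\binom{i}{2}\ge w(S_j^*)/\binom{j}{2}$. I do not expect a serious obstacle here; the only points needing a little care are confirming that the binomial coefficients are well defined and that $\binom{i}{2}>0$ under the hypothesis $2\le i\le j\le n$, and double-checking the edge multiplicity $\binom{j-2}{i-2}$ — both routine. (One could alternatively reduce to the case $j=i+1$ and chain, but the direct argument above is cleaner and avoids an induction.)
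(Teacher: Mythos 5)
Your proof is correct, but it takes a genuinely different route from the paper's. The paper reduces to consecutive sizes: it shows $w(S_i^*)/\binom{i}{2}\ge w(S_{i+1}^*)/\binom{i+1}{2}$ by deleting from $S_{i+1}^*$ a vertex $u$ of minimum weighted degree, bounding $d_{S_{i+1}^*}(u)\le \frac{2}{i+1}w(S_{i+1}^*)$ by averaging over the $i+1$ vertices, and then chains these one-step inequalities. You instead average over all $\binom{j}{i}$ size-$i$ subsets of $S_j^*$ at once: each edge is counted $\binom{j-2}{i-2}$ times, each subset has weight at most $w(S_i^*)$ by maximality, and the identity $\binom{j}{i}/\binom{j-2}{i-2}=\binom{j}{2}/\binom{i}{2}$ finishes the argument. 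All steps check out, including the multiplicity count and the well-definedness of the binomial coefficients for $2\le i\le j$. The two arguments are equally strong here; yours is a single global double count with no induction, which is arguably cleaner, while the paper's local minimum-degree-removal step has the advantage of being the same mechanism reused in its analysis of the greedy peeling (Lemma~\ref{lem:convex_peeling}), so the paper's choice is more a matter of thematic economy than necessity.
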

\begin{proof}
It suffices to show that 
$w(S^*_i)/\binom{i}{2}\ge w(S^*_{i+1})/\binom{i+1}{2}$ holds for any positive integer $i\in [2,n-1]$. 
For $S\subseteq V$ and $v\in S$, 
let $d_S(v)$ denote the weighted degree of $v$ in the induced subgraph $G[S]$, 
i.e., $d_S(v)=\sum_{u\in V:\,\{u,v\}\in E(S)}w(\{u,v\})$. 
Take a vertex $u\in \argmin\{d_{S^*_{i+1}}(v)\mid v\in S^*_{i+1}\}$.
Then we obtain \(d_{S^*_{i+1}}(u)\le \frac{1}{i+1}\sum_{v\in S^*_{i+1}}d_{S^*_{i+1}}(v)= \frac{2}{i+1}\cdot w(S^*_{i+1})\).
Hence, we have
\begin{align*}
\frac{w(S^*_i)}{\binom{i}{2}}
\ge \frac{w(S^*_{i+1}\setminus \{u\})}{\binom{i}{2}}
=\frac{w(S^*_{i+1})-d_{S^*_{i+1}}(u)}{\binom{i}{2}}
\ge \frac{(1-\frac{2}{i+1})\cdot w(S^*_{i+1})}{\binom{i}{2}}
=\frac{w(S^*_{i+1})}{\binom{i+1}{2}}, 
\end{align*}
as desired.
\end{proof}

Using the above lemma, we can provide the approximation ratio. 
\begin{lemma}\label{lem:convex_small}
Let $S^*\subseteq V$ be an optimal solution to $f$-DS with convex function $f$. 
If $|S^*|\leq k$, then Algorithm~\ref{alg:brute} obtains an optimal solution. 
If $|S^*|\geq k$, then it holds that
  \begin{align*}
    \frac{w(S^*)}{f(|S^*|)}\le
    \frac{2\cdot f(k)/k^2}{f(|S^*|)/|S^*|^2}\cdot \frac{w(S^*_k)}{f(k)}.
  \end{align*}
\end{lemma}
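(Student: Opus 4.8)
The plan is to handle the two cases of the lemma separately; in fact neither case uses convexity of $f$, only its monotonicity. For the case $|S^*|\le k$, I would simply observe that since $2\le |S^*|\le k$, the size $|S^*|$ is among those enumerated by Algorithm~\ref{alg:brute}, so the algorithm finds a heaviest subset $S^*_{|S^*|}\in\argmax\{w(S)\mid S\subseteq V,\ |S|=|S^*|\}$. Because $S^*$ is itself a subset of size $|S^*|$, we get $w(S^*_{|S^*|})\ge w(S^*)$ and hence $w(S^*_{|S^*|})/f(|S^*|)\ge w(S^*)/f(|S^*|)$, the optimal value of $f$-DS. Since the algorithm returns the candidate of largest $f$-density, its output is optimal. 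This part is essentially immediate.

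For the case $|S^*|\ge k$, I would first rewrite the target inequality. Multiplying out, $\frac{2f(k)/k^2}{f(|S^*|)/|S^*|^2}\cdot\frac{w(S^*_k)}{f(k)}=\frac{2|S^*|^2\,w(S^*_k)}{k^2 f(|S^*|)}$, so the claimed bound is equivalent to $\frac{w(S^*)}{|S^*|^2}\le \frac{2\,w(S^*_k)}{k^2}$. To prove this I would chain three elementary facts: (i) $\binom{|S^*|}{2}\le |S^*|^2/2$, giving $\frac{w(S^*)}{|S^*|^2}\le \frac{w(S^*)}{2\binom{|S^*|}{2}}$; (ii) since $w(S^*)\le w(S^*_{|S^*|})$ (as $S^*_{|S^*|}$ is a heaviest subset of its size) and $k\le |S^*|$, Lemma~\ref{lem:edgedensity} yields $\frac{w(S^*)}{\binom{|S^*|}{2}}\le \frac{w(S^*_{|S^*|})}{\binom{|S^*|}{2}}\le \frac{w(S^*_k)}{\binom{k}{2}}$; and (iii) $\binom{k}{2}=k(k-1)/2\ge k^2/4$ for $k\ge 2$, giving $\frac{w(S^*_k)}{2\binom{k}{2}}\le \frac{2\,w(S^*_k)}{k^2}$. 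Concatenating (i)--(iii) gives the desired inequality.

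The only genuinely nontrivial ingredient is Lemma~\ref{lem:edgedensity} (monotonicity of the edge density), which is already available, together with the fact---established earlier in the paper---that an optimal solution to $f$-DS is a heaviest subset of its size; everything else is bookkeeping with binomial coefficients. Accordingly I do not expect a real obstacle here. The one point requiring a little care is where the factor $2$ in the bound comes from: it is exactly the slack accumulated by overestimating $\binom{|S^*|}{2}$ by $|S^*|^2/2$ and underestimating $\binom{k}{2}$ by $k^2/4$, and both estimates are tight at $k=2$, which is consistent with the $f(2)/2$ appearing in the brute-force term of Theorem~\ref{thm:convex}.
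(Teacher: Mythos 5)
Your proof is correct and takes essentially the same route as the paper's: both cases rest on Lemma~\ref{lem:edgedensity}, and your chain (i)--(iii) is exactly the paper's one-line computation with the combined factor $\frac{1-1/|S^*|}{1-1/k}\le 2$ split into the two separate binomial estimates $2\binom{|S^*|}{2}\le |S^*|^2$ and $\binom{k}{2}\ge k^2/4$. (One cosmetic slip: your estimate (i) is not tight at $k=2$ but only asymptotically as $|S^*|\to\infty$; this has no bearing on the validity of the argument.)
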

\begin{proof}
If $|S^*|\le k$, then Algorithm~\ref{alg:brute} obtains an optimal solution because $S^*\in \{S^*_2,\dots,S^*_k\}$.
If $|S^*|\ge k$, then we have 
  \begin{align*}
    \frac{w(S^*)}{f(|S^*|)}
    &\leq \frac{w(S^*)}{f(|S^*|)}\cdot \frac{w(S_k^*)/\binom{k}{2}}{w(S^*)/\binom{|S^*|}{2}}
	= \frac{f(k)/\binom{k}{2}}{f(|S^*|)/\binom{|S^*|}{2}}\cdot\frac{w(S^*_k)}{f(k)}\\
	&= \frac{1-1/|S^*|}{1-1/k}\cdot \frac{f(k)/k^2}{f(|S^*|)/|S^*|^2}\cdot\frac{w(S^*_k)}{f(k)}
    \leq \frac{2\cdot f(k)/k^2}{f(|S^*|)/|S^*|^2}\cdot\frac{w(S^*_k)}{f(k)},
  \end{align*}
  where the first inequality follows from Lemma~\ref{lem:edgedensity}, 
  and the last inequality follows from $k\ge 2$.
\end{proof}
From this lemma, we see that Algorithm~\ref{alg:brute} with $k=2$ 
has an approximation ratio of $\frac{f(2)/2}{f(|S^*|)/|S^*|^2}$.

\subsubsection{Greedy Peeling}\label{subsec:large}
Here we adopt the greedy peeling.  
For $S\subseteq V$ and $v\in S$, 
let $d_S(v)$ denote the weighted degree of $v$ in the induced subgraph $G[S]$, 
i.e., $d_S(v)=\sum_{u\in V:\,\{u,v\}\in E(S)}w(\{u,v\})$. 
Our algorithm iteratively removes the vertex with the smallest weighted degree in the currently remaining graph, 
and then returns $S\subseteq V$ with maximum $w(S)/f(|S|)$ over the iterations.  
For reference, we describe the procedure in Algorithm~\ref{alg:peeling}. 
This algorithm can be implemented to run in $O(m+n\log n)$ time for weighted graphs and $O(m+n)$ time for unweighted graphs. 
\begin{algorithm}[t] 
\caption{Greedy peeling}\label{alg:peeling}
$S_n\ot V$\;
\For{$i\ot n,\dots,2$}{
  Find $v_i\in \argmin_{v\in S_i} d_{S_i}(v)$ and $S_{i-1}\ot S_i\setminus\{v_i\}$\;
}
\Return $S\in \{S_1,\dots, S_n\}$ that maximizes $w(S)/f(|S|)$\;
\end{algorithm}

The following lemma provides the approximation ratio. 
\begin{lemma}\label{lem:convex_peeling}
  Let $S^*\subseteq V$ be an optimal solution to $f$-DS with convex function $f$. 
  Algorithm \ref{alg:peeling} returns $S\subseteq V$ that satisfies 
  \begin{align*}
 \frac{w(S^*)}{f(|S^*|)}\leq \frac{2f(n)/n}{f(|S^*|)-f(|S^*|-1)}\cdot \frac{w(S)}{f(|S|)}. 
  \end{align*}
\end{lemma}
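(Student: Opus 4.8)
The plan is to extend Charikar's \emph{critical-iteration} analysis of greedy peeling for the densest subgraph problem. Write $\lambda^*=w(S^*)/f(|S^*|)$ for the optimal $f$-density, and let $S_n\supseteq S_{n-1}\supseteq\dots\supseteq S_1$ and $v_n,\dots,v_2$ be the sets and vertices produced by Algorithm~\ref{alg:peeling}. The first step is to record a lower bound on the weighted degrees inside $S^*$: since $S^*$ is optimal, for every $v\in S^*$ the set $S^*\setminus\{v\}$ satisfies $w(S^*\setminus\{v\})/f(|S^*|-1)\le\lambda^*$; using $w(S^*\setminus\{v\})=w(S^*)-d_{S^*}(v)$ and $w(S^*)=\lambda^*f(|S^*|)$, this rearranges to
\[
  d_{S^*}(v)\ \ge\ \lambda^*\,(f(|S^*|)-f(|S^*|-1))\qquad\text{for every }v\in S^*.
\]
Only the optimality of $S^*$ and monotonicity of $f$ are needed for this.

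Next I would single out the critical iteration. Let $j$ be the largest index with $S^*\subseteq S_j$; it is well defined because $S^*\subseteq V=S_n$, and $j\ge|S^*|\ge 2$ since $|S_j|=j$. The vertex $v_j$ removed at iteration $j$ must lie in $S^*$, for otherwise $S^*\subseteq S_j\setminus\{v_j\}=S_{j-1}$, contradicting the maximality of $j$. Because $v_j$ has the smallest weighted degree in $S_j$ and $S^*\subseteq S_j$ (so that $d_{S_j}(v_j)\ge d_{S^*}(v_j)$, the edge weights being positive), the previous bound gives $d_{S_j}(v_j)\ge\lambda^*(f(|S^*|)-f(|S^*|-1))$. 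On the other hand, being a minimizer, $d_{S_j}(v_j)\le\frac1j\sum_{v\in S_j}d_{S_j}(v)=\frac{2w(S_j)}{j}$. Combining the two estimates yields $w(S_j)\ge\frac{j}{2}\,\lambda^*\,(f(|S^*|)-f(|S^*|-1))$.

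Finally I would divide by $f(j)$ and invoke convexity. Since $f(0)=0$ and $f$ is convex, $f(x)/x$ is monotonically non-decreasing, hence $j/f(j)\ge n/f(n)$, and therefore
\[
  \frac{w(S_j)}{f(j)}\ \ge\ \frac{j}{2f(j)}\,\lambda^*\,(f(|S^*|)-f(|S^*|-1))\ \ge\ \frac{n}{2f(n)}\,(f(|S^*|)-f(|S^*|-1))\cdot\frac{w(S^*)}{f(|S^*|)}.
\]
Since Algorithm~\ref{alg:peeling} returns the set $S\in\{S_1,\dots,S_n\}$ maximizing $w(\cdot)/f(\cdot)$, we have $w(S)/f(|S|)\ge w(S_j)/f(j)$, and rearranging the displayed inequality gives the claimed bound.

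Nothing in this is computationally heavy; the delicate points are (i) choosing the correct critical iteration $j$ and verifying both $v_j\in S^*$ and $d_{S_j}(v_j)\ge d_{S^*}(v_j)$, and (ii) recognising that the relevant ``marginal'' of the size function is $f(|S^*|)-f(|S^*|-1)$ rather than $f(|S^*|)/|S^*|$ --- precisely the place where passing from $|S|$ to $f(|S|)$ alters Charikar's original computation, and the reason the factor $\frac{2f(n)/n}{f(|S^*|)-f(|S^*|-1)}$ collapses to $2$ when $f(x)=x$. One should also read the optimality inequality in the multiplicative form $w(S^*)-d_{S^*}(v)\le\lambda^*f(|S^*|-1)$ so that the degenerate case $f(|S^*|-1)=0$ causes no division by zero.
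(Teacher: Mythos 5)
Your proof is essentially identical to the paper's: the same degree lower bound $d_{S^*}(v)\ge \lambda^*(f(|S^*|)-f(|S^*|-1))$ derived from the optimality of $S^*$, the same critical iteration with $v_j\in S^*$ and $d_{S_j}(v_j)\ge d_{S^*}(v_j)$, the same averaging bound $d_{S_j}(v_j)\le 2w(S_j)/j$, and the same use of the monotonicity of $f(x)/x$ to pass from $j$ to $n$. One slip to fix: the critical index must be the \emph{smallest} $j$ with $S^*\subseteq S_j$, not the largest (the largest is trivially $n$ since $S_n=V$, and $v_n$ need not lie in $S^*$); your own justification --- that $v_j\notin S^*$ would force $S^*\subseteq S_{j-1}$ --- is exactly the minimality argument, so with that word changed the proof goes through verbatim.
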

\begin{proof}
Choose an arbitrary vertex $v\in S^*$. By the optimality of $S^*$, we have
\begin{align*}
\frac{w(S^*)}{f(|S^*|)}\ge \frac{w(S^*\setminus\{v\})}{f(|S^*|-1)}. 
\end{align*}
By using the fact that $w(S^*\setminus \{v\}) = w(S^*)-d_{S^*}(v)$, the above inequality can be transformed to 
\begin{align}\label{ineq:deg}
d_{S^*}(v)\ge (f(|S^*|)-f(|S^*|-1))\cdot \frac{w(S^*)}{f(|S^*|)}.
\end{align}
Let $l$ be the smallest index that satisfies $S_l\supseteq S^*$, 
where $S_l$ is the subset of vertices of size $l$ appeared in Algorithm~\ref{alg:peeling}. 
Note that $v_l\ (\in \argmin_{v\in S_l} d_{S_l}(v))$ is contained in $S^*$. 
Then we have 
\begin{align*}
  \frac{w(S_l)}{f(l)}
  &= \frac{\sum_{u\in S_l}d_{S_l}(u)}{2f(l)}
  \ge \frac{l\cdot d_{S_l}(v_l)}{2f(l)}
  \ge \frac{d_{S^*}(v_l)}{2f(l)/l}\\
  &\ge \frac{f(|S^*|)-f(|S^*|-1)}{2f(l)/l}\cdot\frac{w(S^*)}{f(|S^*|)}
  \ge \frac{f(|S^*|)-f(|S^*|-1)}{2f(n)/n}\cdot\frac{w(S^*)}{f(|S^*|)}, 
\end{align*}
where the first inequality follows from the greedy choice of $v_l$, 
the second inequality follows from $S_l\supseteq S^*$, 
the third inequality follows from inequality (\ref{ineq:deg}), 
and the last inequality follows from the monotonicity of $f(x)/x$. 
Since Algorithm~\ref{alg:peeling} considers $S_l$ as a candidate subset of the output, we have the lemma. 
\end{proof}

\subsection{Examples}
Here we observe the behavior of the approximation ratio of our algorithm for three concrete convex size functions. 
We consider size functions \emph{between} linear and quadratic because 
$f$-DS with any super-quadratic size function is a trivial problem; in fact, it only produces constant-size optimal solutions. 
This follows from the inequality \(\frac{f(2)/2}{f(|S^*|)/|S^*|^2}\ge 1\) 
(i.e., $f(2)/2\geq f(|S^*|)/|S^*|^2$) by Lemma~\ref{lem:convex_small}. 

\paragraph*{(i) $\boldsymbol{f(x)=x^{\alpha}\ (\alpha \in [1,2])}$.}
The following corollary provides an approximation ratio of our algorithm. 
\begin{corollary}
  For $f$-DS with $f(x)=x^{\alpha}~(\alpha \in [1,2])$, 
  our algorithm has an approximation ratio of $2\cdot n^{(\alpha-1)(2-\alpha)}$. 
\end{corollary}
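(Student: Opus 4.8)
The plan is to specialize Theorem~\ref{thm:convex} to $f(x)=x^{\alpha}$ and then optimize the resulting bound over the unknown optimal size. Write $s=|S^{*}|$, so that $2\le s\le n$, and substitute into the two quantities appearing in the minimum of Theorem~\ref{thm:convex}:
\[
\frac{f(2)/2}{f(s)/s^{2}}=2^{\alpha-1}\,s^{2-\alpha},\qquad
\frac{2f(n)/n}{f(s)-f(s-1)}=\frac{2\,n^{\alpha-1}}{s^{\alpha}-(s-1)^{\alpha}}.
\]
Since $\alpha\le 2$ we have $2^{\alpha-1}\le 2$, so the first quantity is at most $2\,s^{2-\alpha}$. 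For the second quantity I would first prove the elementary inequality $s^{\alpha}-(s-1)^{\alpha}\ge s^{\alpha-1}$, valid for every integer $s\ge 1$ and every $\alpha\ge 1$: rearranging, it is equivalent to $s^{\alpha-1}(s-1)\ge (s-1)^{\alpha}$, i.e. $s^{\alpha-1}\ge (s-1)^{\alpha-1}$, which holds because $\alpha-1\ge 0$ and $s\ge s-1\ge 0$. Hence the second quantity is at most $2\,(n/s)^{\alpha-1}$, and the approximation ratio guaranteed by Theorem~\ref{thm:convex} is at most $2\cdot\min\bigl\{\,s^{2-\alpha},\,(n/s)^{\alpha-1}\,\bigr\}$.

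The remaining step is a one-variable optimization of this expression over $s$, since $s=|S^{*}|$ is not under our control and we need the bound to hold for every possible $s$. For $\alpha\in[1,2]$ the map $s\mapsto s^{2-\alpha}$ is non-decreasing and $s\mapsto (n/s)^{\alpha-1}$ is non-increasing on $s>0$, so the pointwise minimum of the two is maximized where the curves cross. Solving $s^{2-\alpha}=(n/s)^{\alpha-1}$ gives $s=n^{\alpha-1}$, at which both sides equal $n^{(\alpha-1)(2-\alpha)}$. Therefore $\min\{s^{2-\alpha},(n/s)^{\alpha-1}\}\le n^{(\alpha-1)(2-\alpha)}$ for all $s$, and multiplying by $2$ yields the claimed approximation ratio $2\cdot n^{(\alpha-1)(2-\alpha)}$.

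I do not expect any serious obstacle here: the only non-routine ingredient is the discrete-difference inequality $s^{\alpha}-(s-1)^{\alpha}\ge s^{\alpha-1}$ (one could alternatively invoke convexity/the mean value theorem to get $s^{\alpha}-(s-1)^{\alpha}\ge \alpha(s-1)^{\alpha-1}$, but the stated bound is cleaner and loses nothing for the final exponent), and everything else is substitution plus the max–min argument for a monotone-increasing versus monotone-decreasing pair. As sanity checks I would note that the exponent $(\alpha-1)(2-\alpha)$ vanishes at $\alpha=1$ and $\alpha=2$, recovering a factor-$2$ guarantee (consistent with greedy peeling for the densest subgraph problem at $\alpha=1$), and that the worst case is $\alpha=3/2$, giving $2\cdot n^{1/4}$.
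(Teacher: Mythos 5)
Your proof is correct and follows essentially the same route as the paper's: substitute $f(x)=x^{\alpha}$ into Theorem~\ref{thm:convex}, use the bound $s^{\alpha}-(s-1)^{\alpha}\ge s^{\alpha-1}$ to simplify the second term, and then balance the resulting increasing/decreasing pair at $s=n^{\alpha-1}$. No gaps; the sanity checks at $\alpha=1,2$ and $\alpha=3/2$ match the paper's remarks.
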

\begin{proof}
  Let $s=|S^*|$.
  By Theorem~\ref{thm:convex}, the approximation ratio is 
  \begin{align*}
    \min\left\{\frac{f(2)/2}{f(s)/s^2},~\frac{2f(n)/n}{f(s)-f(s-1)}\right\}
    &=\min\left\{2^{\alpha-1}\cdot s^{2-\alpha},~\frac{2n^{\alpha-1}}{s^{\alpha}-(s-1)^{\alpha}}\right\}\\
    &\le \min\left\{2\cdot s^{2-\alpha},~\frac{2n^{\alpha-1}}{s^{\alpha-1}}\right\}
    \le 2\cdot n^{(\alpha-1)(2-\alpha)}. 
  \end{align*}
  The first inequality follows from the fact that 
  $s^\alpha -(s-1)^\alpha = s^\alpha -(s-1)^{\alpha -1}(s-1)\geq s^\alpha -s^{\alpha -1}(s-1)=s^{\alpha -1}$. 
  The last inequality follows from the fact that the first term and the second term of the minimum function are, respectively, 
  monotonically non-decreasing and non-increasing for $s$, 
  and they have the same value at $s=n^{\alpha-1}$.
\end{proof}
Note that an upper bound on $2\cdot n^{(\alpha-1)(2-\alpha)}$ is $2\cdot n^{1/4}$, 
which is attained at $\alpha=1.5$.

\paragraph*{(ii) $\boldsymbol{f(x)=\lambda x+(1-\lambda)x^2\ (\lambda \in [0,1))}$.}
The following corollary provides an approximation ratio of Algorithm~\ref{alg:brute}, 
which is a constant for a fixed $\lambda$. 
\begin{corollary}\label{cor:ex2}
  For $f$-DS with $f(x)=\lambda x+(1-\lambda)x^2~(\lambda \in [0,1))$,
  Algorithm~\ref{alg:brute} with $k=2$ has an approximation ratio of $(2-\lambda)/(1-\lambda)$.
  Furthermore, for any $\epsilon >0$, 
  Algorithm~\ref{alg:brute} with $k\ge \frac{2}{\epsilon}\cdot\frac{\lambda}{1-\lambda}$ 
  has an approximation ratio of $2+\epsilon$. 
\end{corollary}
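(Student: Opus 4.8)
The plan is to rely solely on Lemma~\ref{lem:convex_small}, since the statement concerns Algorithm~\ref{alg:brute} by itself (not the two-procedure algorithm of Theorem~\ref{thm:convex}). Writing $s=|S^*|$, that lemma says that if $s\le k$ then Algorithm~\ref{alg:brute} already returns an optimal solution, while if $s\ge k$ then the ratio of the optimal value $w(S^*)/f(|S^*|)$ to $w(S^*_k)/f(k)$ (a lower bound on the value of the subset actually returned, since the algorithm outputs the best of $S^*_2,\dots,S^*_k$) is at most $\frac{2\,f(k)/k^2}{f(s)/s^2}$. So the entire task reduces to bounding $\frac{f(k)/k^2}{f(s)/s^2}$ from above over all $s\ge k$.

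The key observation is that for $f(x)=\lambda x+(1-\lambda)x^2$ one has $f(x)/x^2=\lambda/x+(1-\lambda)$, which is monotonically non-increasing in $x$ and stays strictly above its limit $1-\lambda$. Hence $f(s)/s^2\ge 1-\lambda$ for every $s\ge 1$, and therefore
\begin{align*}
\frac{f(k)/k^2}{f(s)/s^2}\le \frac{\lambda/k+(1-\lambda)}{1-\lambda}=1+\frac{\lambda}{k(1-\lambda)}.
\end{align*}
Multiplying by the factor $2$ coming from Lemma~\ref{lem:convex_small}, the approximation ratio of Algorithm~\ref{alg:brute} is at most $2+\frac{2\lambda}{k(1-\lambda)}$.

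For $k=2$ this evaluates to $2+\frac{\lambda}{1-\lambda}=\frac{2-\lambda}{1-\lambda}$, which is the first claim. For the second claim it suffices to pick $k$ so that $\frac{2\lambda}{k(1-\lambda)}\le\epsilon$, i.e.\ $k\ge\frac{2}{\epsilon}\cdot\frac{\lambda}{1-\lambda}$; combined with the standing requirement $k\ge2$ (relevant only when this threshold falls below $2$), this gives an approximation ratio of $2+\epsilon$. There is essentially no hard step here — the substance is already in Lemmas~\ref{lem:edgedensity} and~\ref{lem:convex_small}; the only minor points to check are the boundary case $\lambda=0$, where $f(x)=x^2$ and the ratio is exactly $2$ in agreement with both formulas, and ensuring the chosen $k$ is a valid (at least $2$) integer parameter.
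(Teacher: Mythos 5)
Your proof is correct and follows essentially the same route as the paper: both apply Lemma~\ref{lem:convex_small} and bound $f(s)/s^2=\lambda/s+(1-\lambda)$ from below by $1-\lambda$ to get the ratio $2+\frac{2\lambda}{(1-\lambda)k}$, then substitute the two choices of $k$. The extra remarks on the case $|S^*|\le k$ and the boundary case $\lambda=0$ are fine but not needed.
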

\begin{proof}
  Let $s=|S^*|$.
  By Lemma~\ref{lem:convex_small}, the approximation ratio is 
  \begin{align*}
    \frac{2\cdot f(k)/k^2}{f(s)/s^2}
    =2\cdot \frac{\lambda/k+(1-\lambda)}{\lambda/s+(1-\lambda)}
    \le 2\cdot \frac{\lambda/k+(1-\lambda)}{1-\lambda}
    = 2+\frac{2\lambda}{(1-\lambda)k}.
  \end{align*}
  Thus, by choosing $k=2$, the approximation ratio is at most $(2-\lambda)/(1-\lambda)$.
  For any $\epsilon >0$, by choosing $k\geq \frac{2}{\epsilon}\cdot\frac{\lambda}{1-\lambda}$,
  the approximation ratio is at most $2+\epsilon$.
\end{proof}

\paragraph*{(iii) $\boldsymbol{f(x)=x^2/(\lambda x+(1-\lambda))\ (\lambda \in[0,1])}$.}
This size function is derived by density function $\lambda \frac{w(S)}{|S|} + (1-\lambda)\frac{w(S)}{|S|^2}$. 
The following corollary provides an approximation ratio of our algorithm, which is at most $4$. 
\begin{corollary}\label{cor:ex3}
  For $f$-DS with $f(x)=x^2/(\lambda x+(1-\lambda))~(\lambda \in [0,1))$, 
  our algorithm has an approximation ratio of $4/(1+\lambda)$. 
\end{corollary}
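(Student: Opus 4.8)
The plan is to apply the two bounds from Theorem~\ref{thm:convex} (equivalently, Lemma~\ref{lem:convex_small} with $k=2$ and Lemma~\ref{lem:convex_peeling}) to the specific size function $f(x)=x^2/(\lambda x+(1-\lambda))$ and show that the minimum of the two resulting expressions is always at most $4/(1+\lambda)$, regardless of $s=|S^*|$. First I would write down the brute-force bound $\frac{f(2)/2}{f(s)/s^2}$. Since $f(x)/x^2 = 1/(\lambda x+(1-\lambda))$, this ratio simplifies to $\frac{(\lambda\cdot 2+(1-\lambda))^{-1}/2 \cdot 2^2 \cdot \text{(cancel)}}{\dots}$; more directly, $\frac{f(2)/2}{f(s)/s^2} = \frac{4/(\lambda+1)}{2}\cdot(\lambda s + (1-\lambda)) = \frac{2}{1+\lambda}(\lambda s + 1-\lambda)$. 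So the brute-force bound is small when $s$ is small: at $s=2$ it equals exactly $\frac{2}{1+\lambda}(1+\lambda)=2$, and it grows linearly in $s$.

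Next I would compute the greedy-peeling bound $\frac{2f(n)/n}{f(s)-f(s-1)}$. We have $f(n)/n = n/(\lambda n + 1-\lambda)$, which is at most $1/\lambda$ and, more usefully, is increasing in $n$ but bounded; and $f(s)-f(s-1)$ needs a lower bound. A clean way is to note $f(x)=x^2/(\lambda x+1-\lambda)$ is convex, so $f(s)-f(s-1)\ge f'(s-1)$ or, more simply, to use $f(s)-f(s-1)\ge \frac{f(s)}{s}$-type telescoping; but the sharpest elementary route is to observe that $f(x)/x = x/(\lambda x+1-\lambda)$ is increasing, hence $f(s)-f(s-1) = s\cdot\frac{f(s)}{s}-(s-1)\cdot\frac{f(s-1)}{s-1}\ge s\cdot\frac{f(s-1)}{s-1}-(s-1)\cdot\frac{f(s-1)}{s-1} = \frac{f(s-1)}{s-1}$. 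This gives $\frac{2f(n)/n}{f(s)-f(s-1)}\le \frac{2f(n)/n}{f(s-1)/(s-1)}$. Since $f(x)/x$ is increasing and $n\ge s-1$... wait, that goes the wrong way, so instead I would bound $f(n)/n\le 1/\lambda$ directly, or better, bound $f(s)-f(s-1)$ from below by a constant times $1/s$ (since $f'(x)\to 1/\lambda$ but near small $x$ the increment is $\Theta(1)$ — in fact one checks $f(s)-f(s-1)\ge (1-\lambda)\cdot\frac{2s-1}{(\lambda s+1-\lambda)^2}$ or similar), making the greedy bound $O(n/s)$ roughly. The point is the greedy bound is small when $s$ is large.

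The crux of the argument is the balancing step: I would show that for every value of $s\in[2,n]$, at least one of the two bounds is $\le 4/(1+\lambda)$. The brute-force bound $\frac{2}{1+\lambda}(\lambda s+1-\lambda)\le \frac{4}{1+\lambda}$ holds precisely when $\lambda s + 1-\lambda \le 2$, i.e., $s\le (1+\lambda)/\lambda = 1+1/\lambda$. For larger $s$, I need the greedy bound to take over: I would verify that when $s\ge 1+1/\lambda$, the quantity $\frac{2f(n)/n}{f(s)-f(s-1)}\le \frac{4}{1+\lambda}$. Using $f(n)/n\le 1/\lambda$ and a lower bound on $f(s)-f(s-1)$ of the form $\ge \frac{\lambda+1}{2\lambda}\cdot$(something) should close it; the precise inequality to check will be a rational inequality in $s,\lambda$ that is routine once set up. The main obstacle I anticipate is getting a clean enough lower bound on the increment $f(s)-f(s-1)$ — the function's denominator $\lambda s + 1-\lambda$ makes the difference slightly messy — and ensuring the two regimes $s\le 1+1/\lambda$ and $s\ge 1+1/\lambda$ genuinely overlap at the crossover so that no value of $s$ slips through with both bounds exceeding $4/(1+\lambda)$. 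I would finish by noting the worst case $4/(1+\lambda)$ is attained as $\lambda\to 0$, recovering the bound $4$, consistent with the $\alpha=2$ case of example (i) up to the brute-force-only analysis, and for $\lambda=1$ (linear density, $f(x)=x$) it gives ratio $2$, matching Charikar's bound.
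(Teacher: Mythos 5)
Your overall strategy---instantiate the two bounds of Theorem~\ref{thm:convex}, observe that the brute-force bound increases in $s$ while the greedy bound decreases, and balance them---is exactly the paper's. Your brute-force computation is correct: $\frac{f(2)/2}{f(s)/s^2}=\frac{2(\lambda s+1-\lambda)}{1+\lambda}$, which is at most $\frac{4}{1+\lambda}$ precisely when $s\le 1+1/\lambda$. The problem is on the greedy side, where the two estimates you actually commit to are each too lossy, and together they leave a genuine hole at the crossover---the very failure mode you flag but do not resolve. Concretely: (a) replacing $f(n)/n=n/(\lambda n+1-\lambda)$ by $1/\lambda$ discards the $(1-\lambda)$ in the denominator (and degenerates as $\lambda\to 0$), and (b) your telescoping gives $f(s)-f(s-1)\ge f(s-1)/(s-1)$, which is weaker than what the same monotonicity of $f(x)/x$ yields when applied the other way: $f(s-1)\le\frac{s-1}{s}f(s)$, hence $f(s)-f(s-1)\ge f(s)/s$. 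With your two estimates the greedy bound becomes $2+\frac{2(1-\lambda)}{\lambda(s-1)}$, which is $\le\frac{4}{1+\lambda}$ only when $s\ge 2+1/\lambda$, while the brute-force bound needs $s\le 1+1/\lambda$; the interval $\bigl(1+1/\lambda,\,2+1/\lambda\bigr)$ has length one and generically contains an integer. For instance, at $\lambda=0.4$, $s=4$, and $n$ large, your brute-force bound is $\frac{2\cdot 2.2}{1.4}\approx 3.14$ and your greedy bound is $2+\frac{1.2}{1.2}=3$, both exceeding $4/1.4\approx 2.86$, so neither regime covers $s=4$ and the argument does not close.

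The paper avoids this by keeping everything exact up to the single inequality $f(s)-f(s-1)\ge f(s)/s$: the greedy bound is then at most $\frac{2n/(\lambda n+1-\lambda)}{s/(\lambda s+1-\lambda)}$, which is genuinely decreasing in $s$ and meets the increasing brute-force bound at $s=\frac{(1+\lambda)n}{\lambda n+1-\lambda}$. Since the minimum of an increasing and a decreasing function of $s$ is bounded by their common value at the (not necessarily integral or feasible) crossing point, one evaluates $\frac{2}{1+\lambda}\bigl(\lambda\cdot\frac{(1+\lambda)n}{\lambda n+1-\lambda}+1-\lambda\bigr)\le\frac{2}{1+\lambda}\bigl((1+\lambda)+(1-\lambda)\bigr)=\frac{4}{1+\lambda}$. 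If you strengthen your increment bound to $f(s)-f(s-1)\ge f(s)/s$ and drop the $1/\lambda$ relaxation, your balancing argument goes through; as written, it does not.
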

\begin{proof}
  Let $s=|S^*|$.
  By Theorem~\ref{thm:convex}, the approximation ratio is  
  \begin{align*}
  \min\left\{\frac{f(2)/2}{f(s)/s^2},~\frac{2f(n)/n}{f(s)-f(s-1)}\right\}
  &=\min\left\{\frac{2(\lambda s+(1-\lambda))}{1+\lambda},~\frac{\frac{2n}{\lambda n+(1-\lambda)}}{\frac{s^2}{\lambda s+(1-\lambda)}-\frac{(s-1)^2}{\lambda (s-1)+(1-\lambda)}}\right\}\\
  &\le\min\left\{\frac{2(\lambda s+(1-\lambda))}{1+\lambda},~\frac{\frac{2n}{\lambda n+(1-\lambda)}}{\frac{s}{\lambda s+(1-\lambda)}}\right\}\\
  &\le\frac{2}{1+\lambda}\left(\lambda\cdot\frac{(1+\lambda)n}{\lambda n+(1-\lambda)}+(1-\lambda)\right)\\
  &\le\frac{2}{1+\lambda}\left(\lambda\cdot\frac{1+\lambda}{\lambda}+(1-\lambda)\right)
  =\frac{4}{1+\lambda}, 
  \end{align*}
  where the second inequality follows from the fact that 
  the first term and the second term of the minimum function are, respectively, 
  monotonically non-decreasing and non-increasing for $s$, 
  and they have the same value at $s=\frac{(1+\lambda)n}{\lambda n+(1-\lambda)}$.
\end{proof}

\section{Concave Case}\label{sec:concave}
In this section, we investigate $f$-DS with concave function $f$. 
A function $f:\mathbb{Z}_{\geq 0}\rightarrow \mathbb{R}_{\geq 0}$ is said to be \textit{concave} 
if $f(x)-2f(x+1)+f(x+2)\le 0$ holds for any $x\in\mathbb{Z}_{\geq 0}$.
We remark that $f(x)/x$ is monotonically non-increasing for $x$ since we assume that $f(0)=0$. 
It should be emphasized that any optimal solution to $f$-DS with concave function $f$ has a size 
larger than or equal to that of any densest subgraph. 
This follows from inequality \eqref{eq:size} and the monotonicity of $f(x)/x$.

\subsection{Dense Frontier Points} 
Here we define the dense frontier points and prove some basic properties. 
We denote by $\mathcal{P}$ the set \(\{(|S|,w(S))\mid S\subseteq V\}\).
A point in $\mathcal{P}$ is called a \emph{dense frontier point} 
if it is a unique maximizer of $y-\lambda x$ over $(x,y)\in \mathcal{P}$ for some $\lambda>0$.
In other words, the extreme points of the upper convex hull of $\mathcal{P}$ are dense frontier points.
The (smallest) densest subgraph is a typical subset of vertices corresponding to a dense frontier point.
We prove that 
(i) for any dense frontier point, there exists some concave function $f$ 
such that any optimal solution to $f$-DS with the function $f$ corresponds to the dense frontier point, 
and conversely, 
(ii) for any strictly concave function $f$ 
(i.e., $f$ that satisfies $f(x)-2f(x+1)+f(x+2)<0$ for any $x\in\mathbb{Z}_{\ge 0}$), 
any optimal solution to $f$-DS with the function $f$ corresponds to a dense frontier point. 

\begin{figure}[t]
\centering
\includegraphics[scale=1.0]{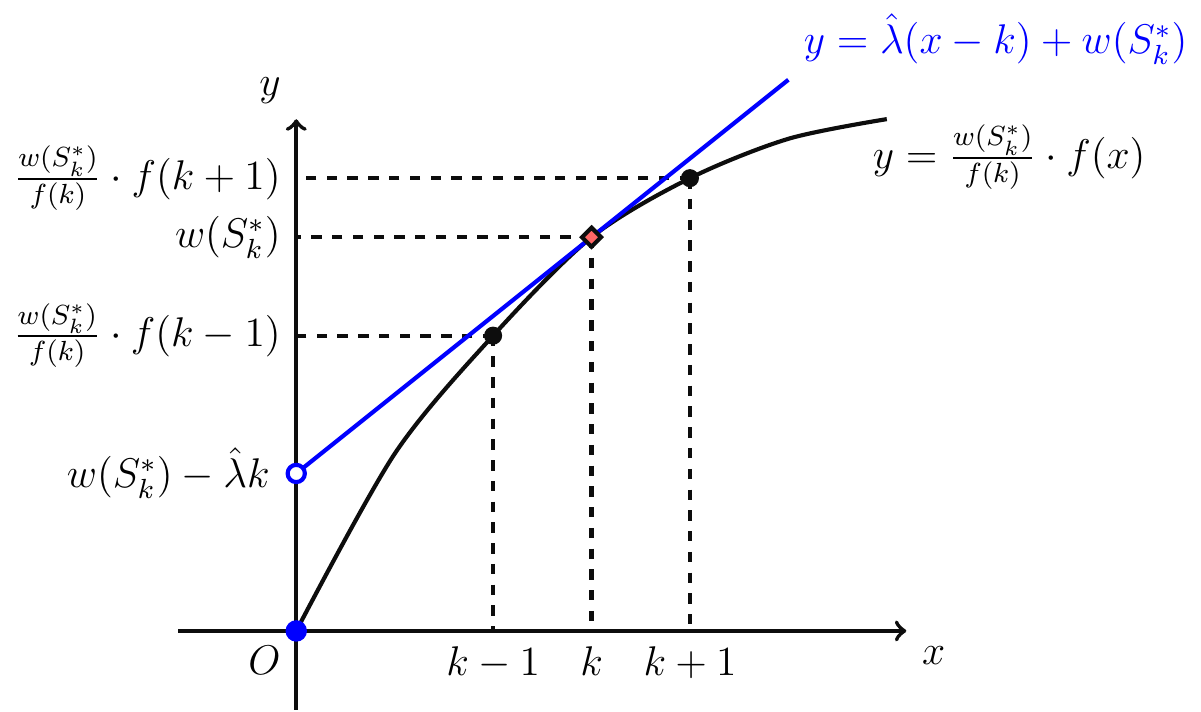}
\caption{A relationship between a dense frontier point and concave funcitons.}\label{fig:dfp}
\end{figure}

We first prove (i). 
Note that each dense frontier point can be written as $(i,w(S_i^*))$ for some $i\in \{0,1,\dots,n\}$, 
where $S_i^*\subseteq V$ is a maximum weight subset of size $i$.
Let $(k,w(S_k^*))$ be a dense frontier point 
and assume that it is a unique maximizer of $y-\hat{\lambda} x$ over $(x,y)\in \mathcal{P}$ for $\hat{\lambda}>0$.
Consider the concave function $f$ such that \(f(x)=\hat{\lambda}(x-k)+w(S_k^*)\) for $x>0$ and $f(0)=0$ (see Figure~\ref{fig:dfp}).
The concavity of $f$ follows from $w(S_k^*)-\hat{\lambda}k\ge w(S_0^*)-\hat{\lambda}\cdot 0=0=f(0)$. 
Then, any optimal solution $S^*\subseteq V$ to $f$-DS with the function $f$ corresponds to 
the dense frontier point (i.e., \((|S^*|,w(S^*))=(k,w(S_k^*))\) holds) 
because \(w(S)/f(|S|)\) is greater than or equal to $1$ 
if and only if \(w(S)-\hat{\lambda}|S|\ge w(S_k^*)-\hat{\lambda}k\) holds.

We next prove (ii). 
Let $f$ be any strictly concave function. 
Let $S_k^*\subseteq V$ be any optimal solution to $f$-DS with the function $f$,  
and take $\hat{\lambda}$ that satisfies 
\((f(k)-f(k-1))\cdot\frac{w(S_k^*)}{f(k)}>\hat{\lambda}> (f(k+1)-f(k))\cdot\frac{w(S_k^*)}{f(k)}\) (see Figure~\ref{fig:dfp}). 
Note that the strict concavity of $f$ guarantees the existence of such $\hat{\lambda}$. 
Since $f$ is strictly concave, we have
\begin{align*}
  \hat{\lambda}(|S|-k)+w(S_k^*)
  \ge \frac{w(S_k^*)}{f(k)}\cdot f(|S|)\ge \frac{w(S)}{f(|S|)}\cdot f(|S|)=w(S)
\end{align*}
for any \(S\subseteq V\), and the inequalities hold as equalities only when $(|S|,w(S))=(k,w(S_k^*))$.
Thus, $(k,w(S_k^*))$ is a unique maximizer of $y-\hat{\lambda} x$ over \((x,y)\in \mathcal{P}\), 
and hence is a dense frontier point.

\subsection{LP-Based Algorithm}
We provide an LP-based polynomial-time exact algorithm. 
We introduce a variable $x_e$ for each $e\in E$ and a variable $y_v$ for each $v\in V$. 
For $k=1,\dots,n$, we construct the following linear programming problem: 
\begin{alignat*}{5}
&\LP_k:&\ \ &\text{maximize}  &\ &\sum_{e\in E}w(e)\cdot x_e &\quad &\\ 
&      &    &\text{subject to}&  &\sum_{v\in V}y_v=k, &&\\
&      &    &                 &  &x_e\le y_u,~x_e\le y_v  &&\text{for all }\ e=\{u,v\}\in E,\\
&      &    &                 &  &x_e,\,y_v\in [0,1] &&\text{for all }\ e\in E,\ v\in V.
\end{alignat*}
For an optimal solution $(\bm{x}^k, \bm{y}^k)$ to $\LPk$ and a real parameter $r$, 
we define a sequence of subsets $S^k(r)=\{v\in V\mid y^k_v\geq r\}$. 
For $k=1,\dots,n$, our algorithm first solves $\LPk$ to obtain an optimal solution $(\bm{x}^k,\bm{y}^k)$, 
and then computes $r^*_k$ that maximizes $w(S^k(r))/f(|S^k(r)|)$. 
Note here that to find such $r^*_k$, 
it suffices to check all the distinct sets $S^k(r)$ by simply setting $r=y^k_v$ for every $v\in V$. 
The algorithm returns $S\in \{S^1(r^*_1),\dots,S^n(r^*_n)\}$ that maximizes $w(S)/f(|S|)$. 
For reference, we describe the procedure in Algorithm \ref{alg:concave_lp}. 
Clearly, the algorithm runs in polynomial time. 
\begin{algorithm}[t] 
\caption{LP-based algorithm}\label{alg:concave_lp}
\For{$k\ot 1,\dots, n$}{
Solve $\LP_k$ and obtain an optimal solution $(\bm{x}^k,\bm{y}^k)$\;
Compute $r^*_k$ that maximizes $w(S^k(r))/f(|S^k(r)|)$\;
}
\Return $S\in \{S^1(r^*_1),\dots, S^n(r^*_n)\}$ that maximizes $w(S)/f(|S|)$\;
\end{algorithm}

In what follows, we demonstrate that 
Algorithm~\ref{alg:concave_lp} obtains an optimal solution to $f$-DS with concave function $f$. 
The following lemma provides a lower bound on the optimal value of $\LPk$. 
\begin{lemma}\label{lem:relax_concave}
For any $S\subseteq V$,
the optimal value of $\LP_{|S|}$ is at least $w(S)$.
\end{lemma}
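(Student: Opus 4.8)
The plan is to exhibit, for an arbitrary fixed $S\subseteq V$, an explicit feasible solution $(\bm{x},\bm{y})$ to $\LP_{|S|}$ whose objective value equals $w(S)$; since $\LP_{|S|}$ is a maximization problem, its optimal value is then at least $w(S)$. The natural candidate is the indicator solution of $S$: set $y_v=1$ for $v\in S$ and $y_v=0$ for $v\notin S$, and set $x_e=1$ for every $e\in E(S)$ and $x_e=0$ for every $e\in E\setminus E(S)$.

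The steps are then routine verification. First, $\sum_{v\in V}y_v=|S|$, so the cardinality constraint is satisfied. Second, for an edge $e=\{u,v\}\in E(S)$ we have $x_e=1=y_u=y_v$, so $x_e\le y_u$ and $x_e\le y_v$ hold with equality; for an edge $e=\{u,v\}\notin E(S)$ at least one endpoint, say $u$, lies outside $S$, so $y_u=0$ and $x_e=0\le y_u$, while $x_e=0\le y_v$ trivially. Third, all variables lie in $\{0,1\}\subseteq[0,1]$. Hence $(\bm{x},\bm{y})$ is feasible for $\LP_{|S|}$. Finally the objective value is $\sum_{e\in E}w(e)x_e=\sum_{e\in E(S)}w(e)=w(S)$, which gives the claimed lower bound.

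There is no real obstacle here: the only mild point to notice is that $k$ ranges over $\{1,\dots,n\}$ in the algorithm, so the statement is implicitly about $|S|\ge 1$; the case $S=\emptyset$ would give $k=0$, which is outside the range, but since we assumed $E\ne\emptyset$ the interesting sets have $|S|\ge 2$ anyway, and for $|S|=1$ the bound $w(S)=0$ is vacuous. The substance of the section lies in the companion upper-bound argument (rounding an optimal LP solution via the threshold sets $S^k(r)$), not in this lemma, which merely records that $\LP_{|S|}$ does not underestimate $w(S)$.
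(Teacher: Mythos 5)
Your proof is correct and is essentially identical to the paper's: both exhibit the indicator solution of $S$ as a feasible point of $\LP_{|S|}$ with objective value $w(S)$. The added remark about the range of $k$ is a reasonable observation but does not change the argument.
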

\begin{proof}
	For $S\subseteq V$, we construct a solution $(\bm{x},\bm{y})$ of $\LP_{|S|}$ as follows: 
  \begin{align*}
    x_e=\begin{cases}
		1&\text{if } e\in E(S),\\
    0&\text{otherwise},
    \end{cases}
    \quad\text{and}\quad
    y_v=\begin{cases}
	1&\text{if } v\in S,\\
    0&\text{otherwise}.
    \end{cases}    
  \end{align*}
  Then we can easily check that $(\bm{x},\bm{y})$ is feasible for $\LP_{|S|}$ and its objective value is $w(S)$.
  Thus, we have the lemma.
\end{proof}

We prove the following key lemma. 
\begin{lemma}\label{lemma:concave_lp}
Let $S^*\subseteq V$ be an optimal solution to $f$-DS with concave function $f$, and let $k^*=|S^*|$. 
Furthermore, let $(\bm{x}^*,\bm{y}^*)$ be an optimal solution to $\LP_{k^*}$.
Then, there exists a real number $r$ such that $S^{k^*}(r)$ is optimal to $f$-DS with concave function $f$.
\end{lemma}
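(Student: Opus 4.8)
The plan is to show that the LP relaxation $\LP_{k^*}$ is \emph{integral} in the sense that one of the threshold sets $S^{k^*}(r)$ achieves $f$-density equal to $w(S^*)/f(k^*)$, hence is optimal. First I would record the two inequalities that pin down the value of $\LP_{k^*}$: by Lemma~\ref{lem:relax_concave} applied to $S^*$, the optimal value of $\LP_{k^*}$ is at least $w(S^*)$; and since $(\bm{x}^*,\bm{y}^*)$ is feasible with $\sum_v y_v^* = k^*$ and the fractional edge weights are dominated by the fractional vertex values, the objective $\sum_e w(e) x_e^*$ is bounded above by $w(S^*)$ (because $S^*$ is a maximum-weight subset of size $k^*$ — this is exactly the observation, made in the introduction, that an optimal solution to $f$-DS is optimal to D$k$S with $k=|S^*|$; one extends this from integral to fractional $\bm{y}$ via an averaging/threshold argument). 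So the optimal value of $\LP_{k^*}$ equals $w(S^*)$.

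Next comes the core threshold-rounding step. For a parameter $r\in(0,1]$ consider $S^{k^*}(r)=\{v : y_v^*\ge r\}$ and the corresponding edge set; the standard fact is $\int_0^1 |S^{k^*}(r)|\,dr = \sum_v y_v^* = k^*$ and $\int_0^1 w(S^{k^*}(r))\,dr \ge \sum_e w(e) x_e^* = w(S^*)$, where the inequality uses $x_e^* \le \min\{y_u^*, y_v^*\}$ so that edge $e$ is counted in $w(S^{k^*}(r))$ for a measure-$\ge x_e^*$ set of $r$. I would then want to argue that there is a single value of $r$ with $w(S^{k^*}(r)) / f(|S^{k^*}(r)|) \ge w(S^*)/f(k^*)$. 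This is where concavity of $f$ enters: write $\mu$ for the probability measure on $\{0,1,\dots,n\}$ induced by pushing forward Lebesgue measure on $(0,1]$ along $r\mapsto |S^{k^*}(r)|$. Then $\mathbb{E}_\mu[|S^{k^*}(r)|] = k^*$ and $\mathbb{E}_\mu[w(S^{k^*}(r))] \ge w(S^*)$. By Jensen's inequality and concavity, $\mathbb{E}_\mu[f(|S^{k^*}(r)|)] \le f(k^*)$. Hence
\begin{align*}
\mathbb{E}_\mu\!\left[w(S^{k^*}(r)) - \tfrac{w(S^*)}{f(k^*)} f(|S^{k^*}(r)|)\right] \ge w(S^*) - \tfrac{w(S^*)}{f(k^*)} f(k^*) = 0,
\end{align*}
so some $r$ in the support satisfies $w(S^{k^*}(r)) - \tfrac{w(S^*)}{f(k^*)} f(|S^{k^*}(r)|) \ge 0$, i.e. $w(S^{k^*}(r))/f(|S^{k^*}(r)|) \ge w(S^*)/f(k^*)$. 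Since $S^*$ is optimal, equality holds and $S^{k^*}(r)$ is optimal too; because the distinct sets $S^{k^*}(r)$ are exactly those obtained at $r = y_v^*$, such an $r$ can be taken among these.

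The main obstacle I anticipate is making the two Jensen-type steps fully rigorous when $\bm{y}^*$ has arbitrary (possibly many repeated) fractional coordinates: one must be careful that the measure $\mu$ is well-defined, that $|S^{k^*}(r)|$ and $w(S^{k^*}(r))$ are the correct right-continuous (or left-continuous) step functions of $r$, and that $\int_0^1 w(S^{k^*}(r))\,dr \ge \sum_e w(e)x_e^*$ really does hold with the chosen convention — the cleanest route is to verify $\int_0^1 \mathbf{1}[e \in E(S^{k^*}(r))]\,dr = \min\{y_u^*,y_v^*\} \ge x_e^*$ edge by edge. A secondary subtlety is the upper bound $\sum_e w(e)x_e^* \le w(S^*)$: rather than invoking D$k$S folklore, it is cleanest to derive it from the same threshold decomposition, since each $S^{k^*}(r)$ is a vertex set of size $|S^{k^*}(r)|$ and $w(S_i^*)$ is its maximum possible weight, combined with the monotone behavior $w(S_i^*)/f(i)$ inherited from Lemma~\ref{lem:edgedensity} and concavity of $f$; but in fact the simplest statement — for every integer $i$, $w(S_i^*) \le \frac{f(i)}{f(k^*)} w(S^*)$ is \emph{not} what we need; we only need $\int_0^1 w(S^{k^*}(r))\,dr \le w(S^*)$ via $w(S^{k^*}(r)) \le w(S^*_{|S^{k^*}(r)|})$ together with concavity giving $\int f(|S^{k^*}(r)|) \cdot (\text{edge density bound})$, so I would instead package everything through the single inequality chain above and only use that $\sum_e w(e)x_e^* = w(S^*)$ from the value computation, avoiding a separate upper bound entirely.
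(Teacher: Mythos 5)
Your proposal is correct and follows essentially the same route as the paper's proof: threshold rounding of the LP solution, the lower bound $\int w(S^{k^*}(r))\,dr = \sum_{e}w(e)x_e^* \ge w(S^*)$ via Lemma~\ref{lem:relax_concave}, the upper bound $\int f(|S^{k^*}(r)|)\,dr \le f(k^*)$ (your Jensen step for the piecewise-linear extension of $f$ is exactly the paper's summation-by-parts computation), and an averaging argument to extract a single good threshold. Two small clean-ups: the upper bound $\sum_e w(e)x_e^* \le w(S^*)$ is never needed (only the $\ge$ direction enters your chain, so you can drop the D$k$S discussion entirely), and since you integrate over $(0,1]$ rather than $(0,\max_v y_v^*]$ you must rule out that the witnessing $r$ gives $S^{k^*}(r)=\emptyset$ (where your inequality holds vacuously as $0\ge 0$ but the ratio is undefined) — this is immediate because both terms of the integrand vanish there, so the expectation bound already forces a nonempty witness.
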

\begin{proof}
For each $e=\{u,v\}\in E$, we have $x^*_e=\min\{y^*_u,y^*_v\}$ from the optimality of $(x^*,y^*)$. 
Without loss of generality, we relabel the indices of $(\bm{x}^*, \bm{y}^*)$ so that $y^*_1\ge \dots\ge y^*_n$.
Then we have  
\begin{align}\label{ineq:numerator}
\int_0^{y_1^*} w(S^{k^*}(r)) dr
&= \int_0^{y_1^*} \left(\sum_{e=\{u,v\}\in E} w(e)\cdot [y^*_u\ge r~\text{and}~y^*_v\ge r]\right) dr \nonumber\\
&= \sum_{e=\{u,v\}\in E} \int_0^{y_1^*} \left(w(e)\cdot [y^*_u\ge r~\text{and}~y^*_v\ge r]\right)dr \nonumber\\
&= \sum_{e=\{u,v\}\in E} w(e)\cdot \min\{y^*_u,y^*_v\}= \sum_{e\in E} w(e)\cdot x_e^*\ge w(S^*), 
\end{align}
where $[y^*_u\ge r~\text{and}~y^*_v\ge r]$ is the function of $r$ 
that takes 1 if the condition in the square bracket is satisfied and 0 otherwise,
and the last inequality follows from Lemma~\ref{lem:relax_concave}.
Moreover, we have
\begin{align}\label{ineq:denominator}
\int_0^{y_1^*} f(|S^{k^*}(r)|) dr &=\sum_{h=1}^n f(h)\cdot (y_h^*-y_{h+1}^*)=\sum_{h=1}^n (f(h)-f(h-1))\cdot y_h^*\nonumber\\
&\le \sum_{h=1}^{k^*} (f(h)-f(h-1))=f(k^*)-f(0)=f(k^*), 
\end{align}
where $y_{n+1}^*$ is defined to be $0$ for convenience, 
and the inequality holds by the concavity of $f$ (i.e., $f(h+2)-f(h+1)\le f(h+1)-f(h)$),
$\sum_{h=1}^n y^*_h=k^*$, and $y^*_h\le 1$.

Let $r^*$ be a real number that maximizes \(w(S^{k^*}(r))/f(|S^{k^*}(r)|)\) in \([0,y_1^*]\). 
Using inequalities (\ref{ineq:numerator}) and (\ref{ineq:denominator}), we have 
\begin{align*}
\frac{w(S^*)}{f(k^*)}
&\le \frac{\int_0^{y_1^*} w(S^{k^*}(r)) dr}{\int_0^{y_1^*} f(|S^{k^*}(r)|) dr}
= \frac{\int_0^{y_1^*} \left(\frac{w(S^{k^*}(r))}{f(|S^{k^*}(r)|)}\cdot f(|S^{k^*}(r)|)\right) dr}{\int_0^{y_1^*} f(|S^{k^*}(r)|) dr}\\
&\le \frac{\int_0^{y_1^*} \left(\frac{w(S^{k^*}(r^*))}{f(|S^{k^*}(r^*)|)}\cdot f(|S^{k^*}(r)|)\right) dr}{\int_0^{y_1^*} f(|S^{k^*}(r)|) dr}
= \frac{w(S^{k^*}(r^*))}{f(|S^{k^*}(r^*)|)}.
\end{align*}
This completes the proof. 
\end{proof}

Algorithm~\ref{alg:concave_lp} considers $S^{k^*}(r^*)$ as a candidate subset of the output. 
Therefore, we have the desired result. 

\begin{theorem}
Algorithm \ref{alg:concave_lp} is a polynomial-time exact algorithm
for $f$-DS with concave function $f$.
\end{theorem}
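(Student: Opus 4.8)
The plan is to obtain the theorem by combining two ingredients that are essentially already in place: the (routine) polynomial bound on the running time of Algorithm~\ref{alg:concave_lp}, and Lemma~\ref{lemma:concave_lp}, which guarantees that one of the subsets inspected by the algorithm is optimal for $f$-DS with concave $f$. So the write-up splits into a running-time part and a correctness part, with correctness itself proved in two directions (upper and lower bound on the $f$-density of the output).

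For the running time, I would observe that the outer loop runs $n$ times; in iteration $k$ we solve $\LP_k$, a linear program with $m+n$ variables and $O(m)$ constraints, which takes polynomial time via any polynomial-time LP algorithm, and then we scan the at most $n$ candidate thresholds $r=y^k_v$ ($v\in V$), forming $S^k(r)$ and evaluating $w(S^k(r))/f(|S^k(r)|)$ in $O(m+n)$ time each. The closing selection among the $n$ stored subsets costs $O(n(m+n))$. Hence the total running time is polynomial. At this point I would also note the small bookkeeping fact, already flagged after Algorithm~\ref{alg:concave_lp}, that it suffices to test $r=y^k_v$: $S^k(\cdot)$ is a step function of $r$ taking only finitely many values, constant on the intervals between consecutive coordinates of $\bm{y}^k$, so $w(S^k(r))/f(|S^k(r)|)$ attains its maximum at one of these breakpoints; thus $r^*_k$ is a global maximizer of $w(S^k(r))/f(|S^k(r)|)$ over all real $r$ for which the quotient is defined.

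For correctness I would argue both inequalities. First, every $S^k(r^*_k)$ produced in the loop is a subset of $V$, hence a feasible solution of $f$-DS, so its $f$-density is at most the optimal value $w(S^*)/f(|S^*|)$; consequently the returned set, being the one of largest $f$-density among the stored candidates, also has $f$-density at most the optimum. Conversely, apply Lemma~\ref{lemma:concave_lp} with $k^*=|S^*|$ and with $(\bm{x}^*,\bm{y}^*)$ taken to be the optimal solution $(\bm{x}^{k^*},\bm{y}^{k^*})$ of $\LP_{k^*}$ that Algorithm~\ref{alg:concave_lp} computes in iteration $k=k^*$ (the lemma applies to any optimal LP solution): it yields a real number $r$ with $w(S^{k^*}(r))/f(|S^{k^*}(r)|)\ge w(S^*)/f(|S^*|)$, i.e.\ $S^{k^*}(r)$ is optimal. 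Since in that iteration the algorithm picks $r^*_{k^*}$ maximizing $w(S^{k^*}(r))/f(|S^{k^*}(r)|)$ over all relevant $r$, the stored set $S^{k^*}(r^*_{k^*})$ has $f$-density at least that of $S^{k^*}(r)$, hence is optimal; therefore the final output is optimal. Combining the two directions shows the returned set attains the optimal $f$-density.

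I do not expect a real obstacle here: the technical core is Lemma~\ref{lemma:concave_lp} together with Lemma~\ref{lem:relax_concave}, both already established. The only points needing a word of care are the bookkeeping one above and the well-definedness of the $f$-densities in play, which is fine because we may assume $f$ is not identically zero (otherwise $f$-DS is vacuous) and every candidate set of interest is nonempty, so the relevant denominators $f(|S|)$ are positive. I would state these briefly and conclude.
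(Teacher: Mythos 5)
Your proposal is correct and follows essentially the same route as the paper: the paper's proof likewise consists of invoking Lemma~\ref{lemma:concave_lp} to conclude that iteration $k^*=|S^*|$ produces an optimal candidate (together with the observation that polynomiality of the running time is routine), and your additional remarks on the sufficiency of the breakpoints $r=y^k_v$ and on feasibility of all candidates are just the details the paper leaves implicit.
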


By Lemma \ref{lemma:concave_lp}, 
for any concave function $f$, an optimal solution to $f$-DS with the function $f$ 
is contained in \(\{S^k(r)\mid k=1,\dots,n,~r\in[0,1]\}\) whose cardinality is at most $n^2$. 
As shown above, for any dense frontier point, there exists some concave function $f$ such that 
any optimal solution to $f$-DS with the function $f$ corresponds to the dense frontier point. 
Thus, we have the following result. 
\begin{theorem}
We can find a corresponding subset of vertices for every dense frontier point in polynomial time. 
\end{theorem}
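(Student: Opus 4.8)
The plan is to show that the collection of $O(n^2)$ subsets already produced by the main loop of Algorithm~\ref{alg:concave_lp} simultaneously contains a realizer for \emph{every} dense frontier point, and that a single convex-hull computation then isolates exactly those points. Concretely, I would first solve $\LP_k$ for $k=1,\dots,n$ and assemble the collection $\mathcal{C}=\{S^k(r)\mid k=1,\dots,n,\ r\in\{y^k_v:v\in V\}\}\cup\{\emptyset\}$; as already observed after Lemma~\ref{lemma:concave_lp}, $|\mathcal{C}|=O(n^2)$ and $\mathcal{C}$ is computable in polynomial time. For each size $i\in\{0,\dots,n\}$, let $W_i=\max\{w(S)\mid S\in\mathcal{C},\ |S|=i\}$, undefined if $\mathcal{C}$ contains no set of that size, and let $T_i\in\mathcal{C}$ attain this maximum (with $T_0=\emptyset$).

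The crux is the claim that $W_k=w(S_k^*)$ whenever $(k,w(S_k^*))$ is a dense frontier point, in which case $T_k$ is a corresponding subset. To prove it, invoke property~(i) established above: for a dense frontier point $(k,w(S_k^*))$ there is a concave function $f$ such that every optimal solution to $f$-DS with $f$ corresponds to $(k,w(S_k^*))$, and in particular has size $k$. Applying Lemma~\ref{lemma:concave_lp} to this $f$ yields a real number $r$ for which $S^k(r)$ is optimal to that $f$-DS, hence $(|S^k(r)|,w(S^k(r)))=(k,w(S_k^*))$ and $S^k(r)\in\mathcal{C}$. Thus $W_k\ge w(S_k^*)$, while $W_k\le w(S_k^*)$ holds trivially because $S_k^*$ is a maximum-weight subset of size $k$; equality follows, so $T_k$ has size $k$ and weight $w(S_k^*)$.

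It remains to decide which pairs $(i,W_i)$ are dense frontier points. Form the finite planar point set $Q=\{(i,W_i)\mid T_i\ \text{is defined}\}$. Every element of $Q$ lies in $\mathcal{P}$ since it equals $(|S|,w(S))$ for some $S\in\mathcal{C}\subseteq 2^V$; and by the previous paragraph every dense frontier point lies in $Q$. A dense frontier point, being the unique maximizer of a linear functional $y-\lambda x$ with $\lambda>0$, is an extreme point of $\mathrm{conv}(\mathcal{P})$ on its upper hull, hence also an extreme point of $\mathrm{conv}(Q)$ on the upper hull of $Q$; conversely, since all these extreme points already lie in $Q$, the upper hulls of $Q$ and of $\mathcal{P}$ coincide, so the vertices of the upper hull of $Q$ are precisely the dense frontier points. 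I would therefore compute the upper convex hull of $Q$ by a standard $O(n\log n)$ routine and output $T_k$ for each of its vertices $(k,W_k)$; all steps run in polynomial time, which proves the theorem.

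Since Lemmas~\ref{lem:relax_concave} and~\ref{lemma:concave_lp} and property~(i) are already available, there is no genuine algorithmic obstacle here; the step that most needs care is the identity $W_k=w(S_k^*)$ at dense-frontier indices, which is exactly where the richness of $\mathcal{C}$ (via property~(i)) meets the containment $\mathcal{C}\subseteq 2^V$. The only loose ends are degenerate sizes: $f(0)=0$ (and possibly $f(1)=0$) makes the $f$-density ill-defined at sizes $0$ and $1$, so the dense frontier point $(0,0)$ (and a possible $(1,0)$) is never produced as an $f$-DS optimum and must be inserted into $Q$ by hand, and whether the extreme right vertex of the upper hull qualifies as a dense frontier point hinges on the sign of the last hull edge's slope, matching the ``unique maximizer'' wording of the definition. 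None of this affects the polynomial running time.
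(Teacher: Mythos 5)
Your proposal is correct and follows essentially the same route as the paper: combine property~(i) (every dense frontier point is the image of every optimal solution to $f$-DS for some concave $f$) with Lemma~\ref{lemma:concave_lp} (every such optimum appears in the polynomial-size candidate family $\{S^k(r)\}$), so the candidate family realizes all dense frontier points. The extra convex-hull step you add to identify which candidates are dense frontier points, and the remarks on the degenerate sizes $0$ and $1$, are sensible implementation details that the paper leaves implicit but do not change the argument.
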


\subsection{Flow-Based Algorithm}
We provide a combinatorial exact algorithm for unweighted graphs (i.e., $w(e)=1$ for every $e\in E$). 
We first show that using the standard technique for fractional programming, 
we can reduce $f$-DS with concave function $f$ to a sequence of submodular function minimizations. 
The critical fact is that $\max_{S\subseteq V} w(S)/f(|S|)$ is at least $\beta$ 
if and only if $\min_{S\subseteq V} (\beta \cdot f(|S|)-w(S))$ is at most $0$.
Note that for $\beta \geq 0$, the function \(\beta \cdot f(|S|)-w(S)\) is submodular 
because \(\beta \cdot f(|S|)\) and \(-w(S)\) are submodular~\cite{fujishige_05}.
Thus, we can calculate \(\min_{S\subseteq V}(\beta \cdot f(|S|)-w(S))\) in $O(n^5(m+n))$ time 
using Orlin's algorithm \cite{Orlin_09}, 
which implies that we can determine \(\max_{S\subseteq V}w(S)/f(|S|)\ge \beta\) or not in $O(n^5(m+n))$ time.
Hence, we can obtain the value of \(\max_{S\subseteq V}w(S)/f(|S|)\) by binary search.
Note that the objective function of $f$-DS on unweighted graphs may have at most $O(mn)$ distinct values
since $w(S)$ is a nonnegative integer at most $m$.
Thus, the procedure yields an optimal solution in \(O(\log(mn))=O(\log n)\) iterations.
The total running time is $O(n^5(m+n)\cdot \log n)$.

To reduce the computation time, 
we replace Orlin's algorithm with a much faster flow-based algorithm 
that substantially extends a technique of Goldberg's flow-based algorithm 
for the densest subgraph problem~\cite{Goldberg_84}. 
The key technique is to represent the value of \(\min_{S\subseteq V}(\beta \cdot f(|S|)-w(S))\) 
using the cost of minimum cut of a certain directed network constructed from $G$ and $\beta \geq 0$. 

For a given unweighted undirected graph $G=(V,E,w)$ (i.e., $w(e)=1$ for every $e\in E$) and a real number $\beta \ge 0$, 
we construct a directed network $(U,A,w_{\beta})$ as follows. 
Note that for later convenience, we discuss the procedure on weighted graphs. 
The vertex set $U$ is defined by $U=V\cup P\cup\{s,t\}$, where $P=\{p_1,\dots,p_n\}$.
The edge set $A$ is given by $A=A_s\cup A_t\cup A_1\cup A_2$, where
\begin{align*}
A_s&=\{(s,v)\mid v\in V\},\ A_t=\{(p,t)\mid p\in P\},\\
	A_1&=\{(u,v), (v,u)\mid \{u,v\}\in E\},\ \text{and } A_2=\{(v,p)\mid v\in V,~p\in P\}.
\end{align*}
The edge weight $w_{\beta}:A\to\mathbb{R}_{\geq 0}$ is defined by
\begin{align*}
  w_{\beta}(e)=
  \begin{cases}
  d(v)/2 &(e=(s,v)\in A_s),\\
  \beta\cdot k\cdot a_k &(e=(p_k,t)\in A_t),\\
  1/2\ (= w(\{u,v\})/2)&(e=(u,v)\in A_1),\\ 
  \beta\cdot a_k &(e=(v,p_k)\in A_2),\\  
  \end{cases}
\end{align*}
where $d(v)$ is the (weighted) degree of vertex $v$, and 
\begin{align*}
  a_k=\begin{cases}
  2f(k)-f(k+1)-f(k-1) &(k=1,\dots,n-1),\\
  f(n)-f(n-1) & (k=n).
  \end{cases}
\end{align*}
Note that $a_k\geq 0$ holds since $f$ is a monotonically non-decreasing concave function. 
For reference, Figure~\ref{fig:mincut} depicts the network $(U,A,w_\beta)$. 
\begin{figure}[t]
\centering
\includegraphics[scale=1.05]{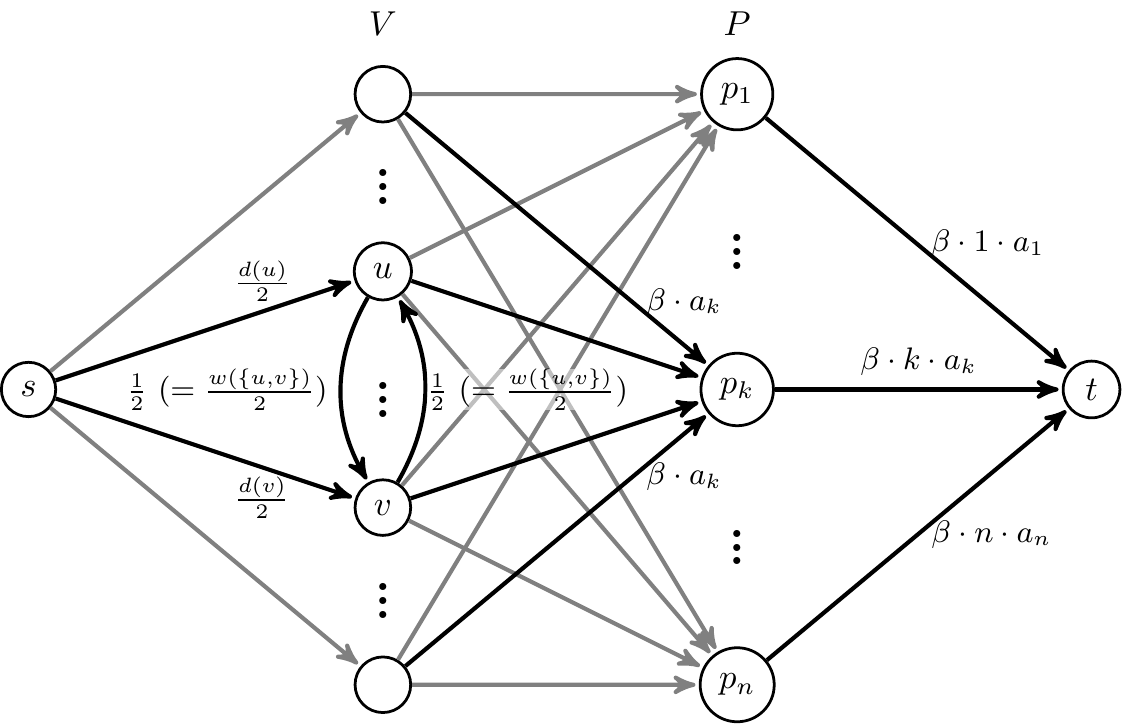}
\caption{The network $(U,A,w_\beta)$ constructed from $G$ and $\beta\geq 0$.} \label{fig:mincut}
\end{figure}

The following lemma reveals the relationship between 
a minimum $s$--$t$ cut in $(U,A,w_\beta)$ and the value of $\min_{S\subseteq V}(\beta \cdot f(|S|) - w(S))$. 
Note that an \emph{$s$--$t$ cut} in $(U,A,w_\beta)$ is a partition $(X,Y)$ of $U$ (i.e., $X\cup Y=U$ and $X\cap Y=\emptyset$) 
such that $s\in X$ and $t\in Y$, 
and the \emph{cost} of $(X,Y)$ is defined to be $\sum_{(u, v)\in A:\,u\in X, v\in Y}w_\beta(u,v)$. 
\begin{lemma}\label{lem:mincut}
  Let $(X,Y)$ be any minimum $s$--$t$ cut in the network $(U,A,w_\beta)$, and let $S=X\cap V$.  
  Then, the cost of $(X,Y)$ is equal to $w(V)+\beta \cdot f(|S|)-w(S)$. 
\end{lemma}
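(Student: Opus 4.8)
The plan is to compute the cost of an arbitrary $s$--$t$ cut $(X,Y)$ directly by summing the weights of the four types of edges crossing from $X$ to $Y$, and then to show that for a minimum cut this sum simplifies to $w(V)+\beta f(|S|)-w(S)$ where $S=X\cap V$. First I would set up notation: write $S=X\cap V$, $\bar S = Y\cap V = V\setminus S$, and $Q = X\cap P$ (the set of $p_k$-vertices landing on the source side). The crossing edges are: (a) source edges $(s,v)$ with $v\in\bar S$, contributing $\sum_{v\in\bar S}d(v)/2$; (b) sink edges $(p_k,t)$ with $p_k\in Q$, contributing $\beta\sum_{k:\,p_k\in Q} k\,a_k$; (c) edges of $A_1$ crossing $S$, i.e.\ $(u,v)$ with $u\in S$, $v\in\bar S$, contributing $\tfrac12$ times the number of ordered such pairs, which equals the total weight of edges of $G$ with exactly one endpoint in $S$ — call this $\mathrm{cut}_G(S)$; (d) edges $(v,p_k)$ in $A_2$ with $v\in S$, $p_k\in Y\setminus Q$, contributing $\beta\sum_{k:\,p_k\notin Q}|S|\,a_k$.

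Next I would argue that in a \emph{minimum} cut, the set $Q$ is forced. Each $p_k$ vertex has a single outgoing edge $(p_k,t)$ of weight $\beta k\,a_k$ and $n$ incoming edges $(v,p_k)$ of weight $\beta a_k$ each, with total incoming weight $\beta n\,a_k$. If $p_k\in X$ we pay $\beta k\,a_k$ (for edge (b)) plus $\beta a_k$ for each $v\in S$ that does \emph{not} reach $p_k$ — wait, more carefully: if $p_k\in X$, the edge $(p_k,t)$ crosses, costing $\beta k a_k$; if $p_k\in Y$, then every edge $(v,p_k)$ with $v\in S$ crosses, costing $\beta|S|a_k$. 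So the optimal choice for each $k$ independently is to put $p_k$ on the side giving $\beta a_k\min\{k,|S|\}$, and since $|S|$ can be any value in $\{0,\dots,n\}$, a minimum cut achieves $\sum_{k=1}^n \beta a_k\min\{k,|S|\}$ for the combined contribution of edge-types (b) and (d). I would then invoke the key algebraic identity: with $a_k = 2f(k)-f(k+1)-f(k-1)$ for $k<n$ and $a_n=f(n)-f(n-1)$, one has $\sum_{k=1}^n a_k\min\{k,\ell\} = f(\ell)$ for every $\ell\in\{0,1,\dots,n\}$. This follows by an Abel summation / telescoping argument: $\sum_{k=1}^{\ell} k a_k + \ell\sum_{k=\ell+1}^n a_k$ telescopes using $f(0)=0$ and the definition of $a_k$.

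Finally I would assemble the pieces. The source-edge term is $\sum_{v\in\bar S}d(v)/2 = \tfrac12 w(V) - \tfrac12\sum_{v\in S}d(v) = \tfrac12 w(V) - \tfrac12\bigl(2w(S)+\mathrm{cut}_G(S)\bigr)$, since the degrees of vertices in $S$, summed, count internal edges of $G[S]$ twice and boundary edges once. Adding the $A_1$ term $\tfrac12\mathrm{cut}_G(S)$ (here using $w\equiv 1$, though the computation goes through for weighted $G$ as the paper notes, with $\mathrm{cut}_G$ and $w(V),w(S)$ interpreted as weighted quantities) cancels the $-\tfrac12\mathrm{cut}_G(S)$, leaving $\tfrac12 w(V) - w(S)$. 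Hmm — that gives $w(V)/2$, not $w(V)$; I would double-check the weight convention on $A_1$ edges and the source weights, but the intended bookkeeping is that source weights $d(v)/2$ plus $A_1$ weights $1/2$ combine to produce the stated $w(V)+\beta f(|S|)-w(S)$ after noting $w_\beta$ on the source side is $d(v)/2$ for \emph{each} of the two... actually the clean statement is: source term over $\bar S$ plus $A_1$-cut term equals $w(V)-w(S)$ when one accounts correctly for $\sum_v d(v) = 2w(V)$. Combined with the $\beta f(|S|)$ from the identity above, this yields the claim. The main obstacle is precisely this careful edge-counting bookkeeping — keeping straight the factors of $1/2$, which degrees are summed, and that $\sum_{v\in V}d(v)=2w(V)$ while $\sum_{v\in S}d_S(v)=2w(S)$ but $\sum_{v\in S}d(v)=2w(S)+\mathrm{cut}_G(S)$ — together with verifying the telescoping identity $\sum_k a_k\min\{k,\ell\}=f(\ell)$; once those two computations are pinned down, the rest is immediate.
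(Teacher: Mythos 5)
Your proposal is correct and follows essentially the same route as the paper: decompose the cut cost by edge type, observe that in a minimum cut each $p_k$ is placed on whichever side gives $\beta a_k\min\{k,|S|\}$ so that types (b) and (d) together contribute $\beta\sum_{k}a_k\min\{k,|S|\}=\beta f(|S|)$ via the telescoping identity, and combine the source and $A_1$ terms into $w(V)-w(S)$. The only wobble is the self-corrected slip in the source term, which should read $\sum_{v\in \bar S}d(v)/2=w(V)-\tfrac12\sum_{v\in S}d(v)$ (since $\sum_{v\in V}d(v)=2w(V)$), after which the bookkeeping closes exactly as in the paper.
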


\begin{proof}
We first show that for any positive integer $s~(\le n)$, it holds that 
\begin{align}\label{eq:a_k}
	\sum_{i=1}^n \min\{i,s\}\cdot a_i=f(s). 
\end{align}
  By the definition of $a_k$, we get
  \begin{align*}
    \sum_{j=i}^n a_j
    &=(f(n)-f(n-1))+\sum_{j=i}^{n-1} (2f(j)-f(j+1)-f(j-1))\\
    &=(f(n)-f(n-1))-\sum_{j=i}^{n-1} ((f(j+1)-f(j))-(f(j)-f(j-1)))
	=f(i)-f(i-1).
  \end{align*}
  Thus, we have
  \begin{align*}
    \sum_{i=1}^n \min\{i,s\}\cdot a_i
    =\sum_{i=1}^s\sum_{j=i}^n a_j
	=\sum_{i=1}^s(f(i)-f(i-1))
	=f(s)-f(0)=f(s).
  \end{align*}

We are now ready to prove the lemma. 
Note that $p_k\in X$ if $|S|>k$ and $p_k\in Y$ if $|S|<k$.
Therefore, the cost of the minimum cut $(X,Y)$ is
\begin{align*}
&\sum_{v\in V\setminus S}\frac{d(v)}{2}
  +\sum_{\{u,v\}\in E:\,u\in S,\,v\in V\setminus S}\frac{w(\{u,v\})}{2}
  +\beta \cdot \sum_{i=1}^n \min\{i,|S|\}\cdot a_i\\
&=\sum_{v\in V\setminus S}\frac{d(v)}{2}
  +\sum_{\{u,v\}\in E:\,u\in S,\,v\in V\setminus S}\frac{w(\{u,v\})}{2}
  +\beta \cdot f(|S|)\\
&=\sum_{\{u,v\}\in E} w(\{u,v\})
  -\sum_{\{u,v\}\in E(S)}w(\{u,v\})
  +\beta \cdot f(|S|)
  =w(V)+\beta \cdot f(|S|)-w(S), 
\end{align*}
where the first equality follows from equality~(\ref{eq:a_k}).
\end{proof}

From this lemma, we see that the cost of a minimum $s$--$t$ cut is \(w(V)+\min_{S\subseteq V}(\beta \cdot f(|S|)-w(S))\). 
Therefore, for a given value $\beta \geq 0$, we can determine whether there exists $S\subseteq V$ 
that satisfies $w(S)/f(|S|)\geq \beta$ 
by checking the cost of a minimum $s$--$t$ cut is at most $w(V)$ or not. 
Our algorithm applies binary search for $\beta$ within the possible objective values of $f$-DS 
(i.e., $\{p/f(q)\mid p=0,1,\dots,m,~q=2,3,\dots,n\}$). 
For reference, we describe the procedure in Algorithm~\ref{alg:concave_flow}. 
The minimum $s$--$t$ cut problem can be solved in $O(N^3/\log N)$ time for a network with $N$ vertices \cite{Cheriyan+96}.
Thus, the running time of our algorithm is $O(\frac{n^3}{\log n}\cdot\log(mn))=O(n^3)$ since $|U|=2n+2$.
We summarize the result in the following theorem. 
\begin{algorithm}[t] 
\caption{Flow-based algorithm for unweighted graphs}\label{alg:concave_flow}
Let $\{\beta_1,\dots,\beta_r\}=\{p/f(q)\mid p=0,1,\dots,m,~q=2,3,\dots,n\}$ such that \(\beta_1<\dots< \beta_r\)\;
$i_{\min}\ot 1$ and $i_{\max}\ot r$\;
\While{\textsc{True}}{
  $i\ot \lfloor(i_{\max}+i_{\min})/2\rfloor$\;
  Compute a minimum $s$--$t$ cut $(X,Y)$ in $(U,A,w_{\beta_i})$\;
  \lIf{the cost of $(X,Y)$ is larger than $w(V)$}{$i_{\max}\ot i-1$}
  \lElseIf{the cost of $(X,Y)$ is less than $w(V)$}{$i_{\min}\ot i+1$}
  \lElse{\Return $X\cap V$}
}
\end{algorithm}

\begin{theorem}
Algorithm~\ref{alg:concave_flow} is an $O(n^3)$-time exact algorithm for $f$-DS with concave function $f$ on unweighted graphs.
\end{theorem}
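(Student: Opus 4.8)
The plan is to build on Lemma~\ref{lem:mincut}, which expresses a minimum-cut cost in terms of $\min_{S\subseteq V}(\beta f(|S|)-w(S))$; the remaining work is to check that this makes each iteration of Algorithm~\ref{alg:concave_flow} a correct decision step and that the whole search fits in $O(n^3)$ time.

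First I would analyze one iteration. Fix $\beta\ge 0$, let $(X,Y)$ be a minimum $s$--$t$ cut of $(U,A,w_\beta)$, and set $S=X\cap V$. By Lemma~\ref{lem:mincut} the cost of $(X,Y)$ is $w(V)+\beta f(|S|)-w(S)$, and since the cut is minimum and every subset of $V$ is realized by some cut, this equals $w(V)+\min_{T\subseteq V}(\beta f(|T|)-w(T))$. Let $\beta^\ast=\max_{T\subseteq V}w(T)/f(|T|)$ be the optimal $f$-density; note $\beta^\ast>0$ since $E\ne\emptyset$. Applying the fractional-programming fact recalled before the construction (and its strict form), $\min_T(\beta f(|T|)-w(T))<0$ iff $\beta<\beta^\ast$, so the cut cost is strictly below $w(V)$ iff $\beta<\beta^\ast$, and equals $w(V)$ otherwise. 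Thus one minimum-cut computation answers the monotone query whether $\beta<\beta^\ast$. Furthermore, when the cost is strictly below $w(V)$, the realized set $S$ satisfies $\beta f(|S|)-w(S)<0$, hence $w(S)/f(|S|)>\beta$; in particular $|S|\ge 2$, and $w(S)/f(|S|)$ is one of the attainable $f$-densities.

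Next I would run the binary search. Because $G$ is unweighted, $\beta^\ast=w(S^\ast)/f(|S^\ast|)$ with $w(S^\ast)\in\{1,\dots,m\}$ and $|S^\ast|\in\{2,\dots,n\}$, so $\beta^\ast$ equals one of the sorted candidates $\beta_1<\dots<\beta_r$ formed in Algorithm~\ref{alg:concave_flow}; say $\beta^\ast=\beta_{i^\ast}$. Since the query whether $\beta_j<\beta^\ast$ is monotone in $j$, the loop maintains an invariant bracketing the sought index, halves the index interval each step, and hence terminates in $O(\log r)$ iterations, identifying the largest index $j^\ast$ with $\beta_{j^\ast}<\beta^\ast$ together with a minimum cut at $\beta_{j^\ast}$. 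For that cut, $S=X\cap V$ has $\beta_{j^\ast}<w(S)/f(|S|)\le\beta^\ast$; since $w(S)/f(|S|)$ is an attainable $f$-density and no candidate lies strictly between $\beta_{j^\ast}$ and $\beta^\ast$, we get $w(S)/f(|S|)=\beta^\ast$, so $S$ is optimal, proving exactness. For the running time: $|U|=2n+2$, so each minimum cut costs $O(|U|^3/\log|U|)=O(n^3/\log n)$ by \cite{Cheriyan+96}; with $r=O(mn)$ and $m\le\binom{n}{2}$, the search uses $O(\log r)=O(\log(mn))=O(\log n)$ iterations, and the candidate list can be navigated within this budget by extracting the required order statistics on demand (by counting, not a full sort of all $\Theta(mn)$ values), which is easily dominated by the minimum-cut computations; hence the total is $O(\frac{n^3}{\log n}\cdot\log n)=O(n^3)$.

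I do not expect a genuine obstacle: Lemma~\ref{lem:mincut} and the fractional-programming reduction supply the only nontrivial ingredient. The two points needing care are (a) reading off the correct strict comparison against $w(V)$ and confirming it is the right monotone oracle — immediate from the exact identity of Lemma~\ref{lem:mincut} — and (b) arguing that the minimum cut at the threshold candidate returns a genuinely optimal set of size $\ge 2$ rather than merely a witness of $\min_T(\beta f(|T|)-w(T))=0$, which relies on the attainable $f$-densities being exactly the listed candidates and on $\beta^\ast>0$. A purely clerical point is keeping the handling of the $\Theta(mn)$-size value list within the $O(n^3)$ bound, which is why I would drive the search by selection rather than by a full sort.
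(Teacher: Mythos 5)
Your argument is correct and takes essentially the same route as the paper: Lemma~\ref{lem:mincut} converts one minimum-cut computation into the monotone oracle ``is $\beta<\beta^*$?'', the optimal density $\beta^*$ lies among the $O(mn)$ candidates $p/f(q)$ because the graph is unweighted, and $O(\log(mn))$ cut computations at $O(n^3/\log n)$ each give $O(n^3)$. That said, two of your refinements diverge from the literal pseudocode of Algorithm~\ref{alg:concave_flow}, and both are improvements rather than errors, so they are worth recording. (a) Since $S=\emptyset$ attains $\beta f(|S|)-w(S)=0$, the minimum-cut cost never exceeds $w(V)$; hence the branch ``cost larger than $w(V)$'' never fires, the loop returns at the first probed $\beta_i\ge\beta^*$, and for $\beta_i>\beta^*$ every minimum cut has $X\cap V=\emptyset$, so the pseudocode as printed can output the empty set. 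Your extraction --- take the cut at the largest candidate strictly below $\beta^*$, note that its source side is nonempty with attainable density strictly above that candidate and at most $\beta^*$, and conclude that this density equals $\beta^*$ because no candidate lies strictly between --- is exactly the repair needed, so your proof establishes the theorem for the intended algorithm rather than for the printed one. (b) You are also right that sorting all $\Theta(mn)$ candidate values would cost $\Theta(mn\log(mn))=\omega(n^3)$ on dense graphs, so line~1 of Algorithm~\ref{alg:concave_flow} must be implemented by on-demand selection and counting over the implicit set $\{p/f(q)\}$, as you propose; the paper's own proof does not address this. Neither point is a gap in your reasoning; both are places where your write-up is tighter than the paper's.
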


For $f$-DS with concave function $f$ on weighted graphs, 
the binary search used in Algorithm~\ref{alg:concave_flow} is not applicable 
because there may be exponentially many possible objective values in the weighted setting. 
Alternatively, we present an algorithm that employs another binary search strategy (Algorithm~\ref{alg:concave_flow_weighted}). 
We have the following theorem.

\newcommand{\low}{\beta_{\mathrm{lb}}}
\newcommand{\hi}{\beta_{\mathrm{ub}}}
\begin{algorithm}[t] 
\caption{Flow-based algorithm for weighted graphs}\label{alg:concave_flow_weighted}
$\low^{(0)}\ot w(S_2^*)/f(2)$, $\hi^{(0)}\ot w(V)/f(2)$, and $i\ot 0$\;
\While{$\hi^{(i)}\ge (1+\epsilon)\cdot \low^{(i)}$}{
  $\beta^{(i)}\ot \sqrt{\low^{(i)}\cdot\hi^{(i)}}$\;
  Compute a minimum $s$--$t$ cut $(X^{(i)},Y^{(i)})$ in $(U,A,w_{\beta^{(i)}})$\;
  \lIf{the cost of $(X^{(i)},Y^{(i)})$ is larger than $w(V)$}{$\low^{(i+1)}\ot \low^{(i)}$ and $\hi^{(i+1)}\ot \beta^{(i)}$}
  \lElse{$\low^{(i+1)}\ot \beta^{(i)}$ and $\hi^{(i+1)}\ot \hi^{(i)}$}
  $i\ot i+1$\;
}
Compute a minimum $s$--$t$ cut $(X^{(i)},Y^{(i)})$ in $(U,A,w_{\low^{(i)}})$ and \textbf{return} $X^{(i)}\cap V$\;
\end{algorithm}

\begin{theorem}
Algorithm~\ref{alg:concave_flow_weighted} is an $O\left(\frac{n^3}{\log n}\cdot \log\left(\frac{\log n}{\epsilon}\right)\right)$-time $(1+\epsilon)$-approximation algorithm for $f$-DS with concave function $f$.
\end{theorem}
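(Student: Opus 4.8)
The plan is to combine Lemma~\ref{lem:mincut}, which turns minimum-cut computations into a decision oracle for the $f$-density, with a geometric-mean binary search that maintains a \emph{multiplicative} bracket around the optimum. Write $\beta^* = \max_{S\subseteq V} w(S)/f(|S|)$ for the optimal value. By Lemma~\ref{lem:mincut}, the cost of a minimum $s$--$t$ cut in $(U,A,w_\beta)$ equals $w(V) + \min_{S\subseteq V}(\beta f(|S|) - w(S))$; hence this cost is at most $w(V)$ if and only if $\beta^* \ge \beta$, and it is strictly larger than $w(V)$ precisely when $\beta^* < \beta$. This is exactly the test used inside the loop of Algorithm~\ref{alg:concave_flow_weighted}, and note $\low^{(0)} = w(S_2^*)/f(2) > 0$ since $E\neq\emptyset$, so the geometric means are well defined.

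First I would establish the loop invariant $\low^{(i)} \le \beta^* \le \hi^{(i)}$. For the base case, $\low^{(0)} = w(S_2^*)/f(2)$ is the $f$-density of a heaviest edge, hence a feasible value, so $\low^{(0)} \le \beta^*$; and for every $S$ with $|S|\ge 2$ we have $w(S)/f(|S|) \le w(V)/f(|S|) \le w(V)/f(2) = \hi^{(0)}$ by monotonicity of $f$, so $\beta^* \le \hi^{(0)}$. The inductive step is immediate from the min-cut test: if the cost of $(X^{(i)},Y^{(i)})$ exceeds $w(V)$ then $\beta^* < \beta^{(i)}$ and we set $\hi^{(i+1)} = \beta^{(i)}$; otherwise (including the tie case of cost exactly $w(V)$) there is $S$ with $w(S)/f(|S|)\ge\beta^{(i)}$, so $\beta^* \ge \beta^{(i)}$ and we set $\low^{(i+1)} = \beta^{(i)}$.

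Next I would track the shrinkage of the bracket. Since $\beta^{(i)} = \sqrt{\low^{(i)}\hi^{(i)}}$, after each iteration the ratio $\hi/\low$ is replaced by $\sqrt{\hi^{(i)}/\low^{(i)}}$, so $\log(\hi^{(i)}/\low^{(i)}) = 2^{-i}\log(\hi^{(0)}/\low^{(0)})$. Because $\hi^{(0)}/\low^{(0)} = w(V)/w(S_2^*) \le m \le n^2$, the loop condition $\hi^{(i)} \ge (1+\epsilon)\low^{(i)}$ fails once $2^{-i}\cdot 2\log n < \log(1+\epsilon)$; using $\log(1+\epsilon) = \Omega(\min\{\epsilon,1\})$, this occurs after $i = O(\log(\log n/\epsilon))$ iterations. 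Each iteration computes a minimum $s$--$t$ cut in a network on $|U| = 2n+2$ vertices, taking $O(|U|^3/\log|U|) = O(n^3/\log n)$ time by the algorithm of Cheriyan et al.~\cite{Cheriyan+96}; together with the final cut computation and the $O(m+n)$ time to find a heaviest edge, the total running time is $O\left(\frac{n^3}{\log n}\cdot\log\left(\frac{\log n}{\epsilon}\right)\right)$.

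Finally, for the approximation guarantee: on termination $\hi^{(i)} < (1+\epsilon)\low^{(i)}$, and the algorithm returns $S = X^{(i)}\cap V$ obtained from a minimum cut in $(U,A,w_{\low^{(i)}})$. Since the invariant gives $\beta^* \ge \low^{(i)}$, that cut has cost at most $w(V)$, and Lemma~\ref{lem:mincut} says this cost equals $w(V) + \low^{(i)} f(|S|) - w(S)$; hence $\low^{(i)} f(|S|) - w(S) \le 0$, i.e.\ $w(S)/f(|S|) \ge \low^{(i)}$. Combining, $\beta^* \le \hi^{(i)} < (1+\epsilon)\low^{(i)} \le (1+\epsilon)\,w(S)/f(|S|)$, so $S$ is a $(1+\epsilon)$-approximate solution. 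I expect the only delicate points to be keeping the invariant under the tie case (minimum-cut cost exactly $w(V)$) and the elementary estimate $\log(1+\epsilon) = \Omega(\epsilon)$ used for the iteration count; the rest is routine bookkeeping on top of Lemma~\ref{lem:mincut}.
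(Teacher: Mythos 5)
Your proof follows essentially the same route as the paper's: the same bracketing invariant $\beta_{\mathrm{lb}}^{(i)}\le \beta^*\le \beta_{\mathrm{ub}}^{(i)}$, the same square-root shrinkage of the ratio $\beta_{\mathrm{ub}}/\beta_{\mathrm{lb}}$ starting from $w(V)/w(S_2^*)\le m$, and the same final chain of inequalities; you merely make explicit the three inequalities that the paper asserts without derivation. One caveat, inherited from the paper's own pseudocode: your characterization of the cut test is not literally correct, since the minimum $s$--$t$ cut cost equals $w(V)+\min_{S\subseteq V}(\beta f(|S|)-w(S))$ and the choice $S=\emptyset$ (i.e., the trivial cut $(\{s\},U\setminus\{s\})$) contributes $0$, so the cost is \emph{never} strictly larger than $w(V)$; the correct dichotomy is cost $<w(V)$ (iff $\beta^*>\beta$) versus cost $=w(V)$ (iff $\beta^*\le\beta$), and the branch should be read as testing equality rather than ``larger than.'' With that reading your invariant, the handling of the tie case, and the rest of the argument go through exactly as in the paper.
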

\begin{proof}
Let $i^*$ be the number of iterations executed by Algorithm~\ref{alg:concave_flow_weighted}, and let $\hat{S}=X^{(i^*)}\cap V$.
Then we have $\max_{S\subseteq V} w(S)/f(|S|)\le \hi^{(i^*)}$, $\hi^{(i^*)}<(1+\epsilon)\cdot \low^{(i^*)}$, and $\low^{(i^*)}\le w(\hat{S})/f(|\hat{S}|)$.
Combining these inequalities, we have  
$\max_{S\subseteq V} w(S)/f(|S|)< (1+\epsilon)\cdot w(\hat{S})/f(|\hat{S}|)$, 
which means that $\hat{S}$ is a $(1+\epsilon)$-approximate solution.

In what follows, we analyze the time complexity of the algorithm. For each $i\in\{0,1,\dots,i^*-1\}$, it holds that 
\begin{align*}
	\frac{\hi^{(i+1)}}{\low^{(i+1)}}
	\leq \max\left\{\frac{\hi^{(i)}}{\beta^{(i)}},~\frac{\beta^{(i)}}{\low^{(i)}}\right\}
	=\max\left\{\frac{\hi^{(i)}}{\sqrt{\low^{(i)}\cdot\hi^{(i)}}},~\frac{\sqrt{\low^{(i)}\cdot\hi^{(i)}}}{\low^{(i)}}\right\}
	= \sqrt{\frac{\hi^{(i)}}{\low^{(i)}}}. 
\end{align*}
Since $\hi^{(0)}/\low^{(0)}=w(V)/w(S_2^*)\leq m$, we have $\hi^{(i)}/\low^{(i)}\leq m^{1/2^i}$ for $i=1,\dots,i^*$. 
Note that $i^*$ is the minimum index $i$ that satisfies $\hi^{(i)}/\low^{(i)}\leq 1+\epsilon$. 
Thus, we see that $i^*$ is upper bounded by $O\left(\log\left(\frac{\log n}{\log(1+\epsilon)}\right)\right)$. 
Therefore, the total running time of the algorithm is 
$O\left(\frac{n^3}{\log n}\cdot \log\left(\frac{\log n}{\log(1+\epsilon)}\right)\right)
=O\left(\frac{n^3}{\log n}\cdot \log\left(\frac{\log n}{\epsilon}\right)\right)$, 
where the equality follows from the fact that $\lim_{\epsilon\to +0}\frac{\epsilon}{\log(1+\epsilon)}=1$ holds.
\end{proof}

\subsection{Greedy Peeling}
Finally, we provide an approximation algorithm with much higher scalability. 
Specifically, we prove that the greedy peeling (Algorithm~\ref{alg:peeling}) has an approximation ratio of 3 
for $f$-DS with concave function $f$. 
As mentioned above, the algorithm runs in $O(m+n\log n)$ time for weighted graphs and $O(m+n)$ time for unweighted graphs.

We prove the approximation ratio. 
Recall that $S_n,\dots, S_1$ are the subsets of vertices produced by the greedy peeling.
We use the following fact, 
which implies that there exists a $3$-approximate solution for Dal$k$S in $S_n,\dots, S_{k}$. 
\begin{fact}[Theorem~1 in Andersen and Chellapilla \cite{Andersen_Chellapilla_09}]\label{lem:ac}
  For any integer $k~(\le n)$, it holds that 
  \begin{align*}
	  \max_{S\subseteq V:\,|S|\ge k}\frac{w(S)}{|S|}\le 3\cdot \max_{k\le i\le n}\frac{w(S_i)}{i}.
  \end{align*}
\end{fact}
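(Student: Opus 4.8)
The plan is to show that the greedy peeling (Algorithm~\ref{alg:peeling}) always retains, among the sets $S_n,\dots,S_k$ of size at least $k$, one whose density is within a factor $3$ of the optimum of Dal$k$S. Fix a set $S^*$ attaining $\max_{|S|\ge k} w(S)/|S|$, write $d^*=w(S^*)/|S^*|$ and $s=|S^*|\ (\ge k)$, and set $\lambda=\max_{k\le i\le n} w(S_i)/i$; the goal is $d^*\le 3\lambda$. The workhorse is a min-degree estimate for the peeling: since $v_i$ minimizes $d_{S_i}(\cdot)$ over $S_i$, we have $2w(S_i)=\sum_{v\in S_i}d_{S_i}(v)\ge |S_i|\cdot d_{S_i}(v_i)$, hence $d_{S_i}(v_i)\le 2w(S_i)/i=2\rho_i$ where $\rho_i=w(S_i)/i$; in particular $d_{S_i}(v_i)\le 2\lambda$ for every $i\ge k$.

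First I would locate the first moment the peeling touches $S^*$: let $\ell$ be the largest index with $S^*\subseteq S_\ell$, so that $v_\ell\in S^*$ and $\ell\ge s\ge k$. I would then record two consequences of optimality of $S^*$: (a) if $s>k$, then deleting any $v\in S^*$ keeps feasibility, so $w(S^*\setminus\{v\})/(s-1)\le d^*$ forces $d_{S^*}(v)\ge d^*$; and (b) for any $u\notin S^*$, the set $S^*\cup\{u\}$ is feasible, so $w(S^*\cup\{u\})/(s+1)\le d^*$ forces the total weight of edges from $u$ into $S^*$ to be at most $d^*$. In the generic case $s>k$, combining the min-degree estimate with (a) gives $2\lambda\ge d_{S_\ell}(v_\ell)\ge d_{S^*}(v_\ell)\ge d^*$, i.e.\ a factor $2$. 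The same factor-$2$ conclusion also holds whenever the (unconstrained) densest subgraph already has size at least $k$, by applying this argument to it in place of $S^*$ (its minimum degree is at least its density unconditionally).

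This leaves the genuinely tight regime, $s=k$ together with a small densest subgraph, which is where the bound degrades from $2$ to $3$. Here I would first dispose of the easy sub-cases through $\ell$. If $\ell=k$, then $S^*\subseteq S_\ell$ and $|S_\ell|=k$ force $S_k=S^*$, so $\lambda\ge d^*$. If $k<\ell\le 3k$, then $S^*\subseteq S_\ell$ gives $w(S_\ell)\ge w(S^*)=k\,d^*$, hence $\rho_\ell\ge k\,d^*/\ell\ge k\,d^*/(3k)=d^*/3$; this computation is exactly where the constant $3$ originates.

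The hard part will be the remaining sub-case $s=k$ and $\ell>3k$, where the single set $S_\ell$ need no longer certify density $d^*/3$: the $\ell-k>2k$ vertices of $R=S_\ell\setminus S^*$ may form a sparse padding that dilutes $\rho_\ell$, and $v_\ell$ (being only the global minimizer over $S_\ell$) can have arbitrarily small weighted degree inside $S^*$, so that $\rho_\ell\ge \tfrac12 d_{S_\ell}(v_\ell)$ is too weak. To resolve it I would not rely on $S_\ell$ but exhibit a better-sized greedy set $S_i$ with $k\le i<\ell$: using bound (b) to cap the total weight of edges between $S^*$ and $R$ by $(\ell-k)d^*$, and the min-degree estimate to control each peeling step in the window from size $\ell$ down to $k$, one amortizes the weight removed across that window against $\lambda$ and argues that some intermediate $S_i$ must still carry density at least $d^*/3$. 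Making this window/charging argument precise—in particular choosing the right intermediate size and ruling out that the removed padding conceals a denser core that greedy should have preserved—is the main obstacle, and it is exactly the step that Andersen and Chellapilla carry out carefully.
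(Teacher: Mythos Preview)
The paper does not prove this statement at all: it is recorded as a \emph{Fact} and attributed to Theorem~1 of Andersen and Chellapilla, with no argument given. So there is no ``paper's own proof'' to compare your attempt against; the paper simply imports the result and uses it as a black box in the proof of Theorem~\ref{thm:concave_peeling}.

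As for your proposal itself, the easy cases are handled correctly: when $|S^*|>k$ (or, more generally, when some unconstrained densest subgraph has size at least $k$) the Charikar-style min-degree argument gives a factor $2$, and when $|S^*|=k$ with $\ell\le 3k$ the containment $S^*\subseteq S_\ell$ immediately yields $\rho_\ell\ge d^*/3$. The genuine gap is the remaining case $|S^*|=k$ and $\ell>3k$. There you only \emph{describe} a plan---bound the crossing weight $w(S^*,R)$ by $(\ell-k)d^*$ via (b), bound each peeled degree by $2\lambda$, and ``amortize'' over the window $[k,\ell]$---but you do not carry it out, and you explicitly defer to Andersen and Chellapilla for the missing step. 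In particular, the two ingredients you list do not by themselves force some $\rho_i\ge d^*/3$ for $k\le i<\ell$: combining $w(S_\ell)\ge k d^*$ with $d_{S_i}(v_i)\le 2\lambda$ for $i\in(k,\ell]$ only reproduces inequalities that are consistent with $\rho_i<d^*/3$ throughout (they merely re-derive $\ell>3k$), and bound (b) caps $w(S^*,R)$ from above, which is the wrong direction for lower-bounding any $\rho_i$. So as written the proposal is not a complete proof; it is a correct reduction to the hard sub-case together with an unfinished sketch for that sub-case.
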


\begin{theorem}\label{thm:concave_peeling}
  The greedy peeling (Algorithm~\ref{alg:peeling}) has an approximation ratio of $3$ 
  for $f$-DS with concave function $f$.
\end{theorem}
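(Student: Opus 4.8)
The plan is to reduce the statement to Fact~\ref{lem:ac} by decoupling the $f$-density into an ordinary density times a correction factor. First I would fix an optimal solution $S^*$ to $f$-DS with concave function $f$ and set $k^*=|S^*|$, noting that $k^*\in[2,n]$ since $E\neq\emptyset$. The only structural property of $f$ I will need is that $x/f(x)$ is monotonically non-decreasing, which holds because $f$ is concave with $f(0)=0$ (equivalently, $f(x)/x$ is monotonically non-increasing, as already observed at the start of Section~\ref{sec:concave}).

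Next I would apply Fact~\ref{lem:ac} with $k=k^*$. Its left-hand side $\max_{S\subseteq V:\,|S|\ge k^*}w(S)/|S|$ is at least $w(S^*)/k^*$, so the inequality of Fact~\ref{lem:ac} yields an index $i$ with $k^*\le i\le n$ such that $w(S_i)/i\ge \frac13\cdot w(S^*)/k^*$. Then I would split the $f$-density of $S_i$ and estimate each factor:
\[
  \frac{w(S_i)}{f(i)}
  =\frac{w(S_i)}{i}\cdot\frac{i}{f(i)}
  \ge \frac13\cdot\frac{w(S^*)}{k^*}\cdot\frac{i}{f(i)}
  \ge \frac13\cdot\frac{w(S^*)}{k^*}\cdot\frac{k^*}{f(k^*)}
  =\frac13\cdot\frac{w(S^*)}{f(k^*)},
\]
where the second inequality uses $i\ge k^*$ together with the monotonicity of $x/f(x)$. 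Since Algorithm~\ref{alg:peeling} returns the subset among $S_1,\dots,S_n$ maximizing $w(S)/f(|S|)$, the returned set $S$ satisfies $w(S)/f(|S|)\ge w(S_i)/f(i)\ge \frac13\cdot w(S^*)/f(|S^*|)$, which is exactly the claimed $3$-approximation.

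There is essentially no hard step here: all the combinatorial content is already packaged in Fact~\ref{lem:ac}, and the only idea needed is to write $w(S_i)/f(i)=\frac{w(S_i)}{i}\cdot\frac{i}{f(i)}$ so that the ordinary-density bound and the monotonicity of $x/f(x)$ apply to the two factors separately. The one point to be careful about is that the index $i$ delivered by Fact~\ref{lem:ac} satisfies $i\ge k^*$, which is precisely the regime in which $i/f(i)\ge k^*/f(k^*)$; if we could only guarantee some $i<k^*$ the argument would fail, so it is essential that Fact~\ref{lem:ac} is the \emph{at-least-$k$} statement (Dal$k$S) rather than a fixed-size one.
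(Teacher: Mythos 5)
Your proof is correct and follows essentially the same route as the paper: both apply Fact~\ref{lem:ac} with $k=|S^*|$ and then use the monotonicity of $f(x)/x$ (equivalently $x/f(x)$) to convert the ordinary-density guarantee on some $S_i$ with $i\ge|S^*|$ into an $f$-density guarantee. The only cosmetic difference is that the paper writes the argument as a single chain of inequalities on $\frac{w(S^*)/s^*}{f(s^*)/s^*}$ rather than first extracting the maximizing index $i$.
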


\begin{proof}
Let $S^*\subseteq V$ be an optimal solution to $f$-DS with concave function $f$, and let $s^*=|S^*|$.
Let $S\subseteq V$ be the output of the greedy peeling for the problem. 
Then we have
\begin{align*}
  \frac{w(S^*)}{f(s^*)}
  &=\frac{w(S^*)/s^*}{f(s^*)/s^*}
  \leq \max_{S'\subseteq V:\,|S'|\ge s^*}\frac{w(S')/|S'|}{f(s^*)/s^*}\\
  &\le 3\cdot \max_{s^*\le i\le n} \frac{w(S_i)/i}{f(s^*)/s^*}
  \le 3\cdot \max_{s^*\le i\le n}\frac{w(S_i)/i}{f(i)/i} 
  \le 3\cdot\frac{w(S)}{f(|S|)}, 
\end{align*}
where the second inequality follows from Fact~\ref{lem:ac}, and the third inequality follows from the monotonicity of $f(x)/x$. 
\end{proof}

\section*{Acknowledgments}
The authors would like to thank Yoshio Okamoto for pointing out the reference~\cite{NaKaAi11}. 
The first author is supported by a Grant-in-Aid for Young Scientists (B) (No.~16K16005). 
The second author is supported by a Grant-in-Aid for JSPS Fellows (No.~26-11908).

\bibliographystyle{abbrv}
\bibliography{kawase_miyauchi}

\end{document}